\documentclass[12pt]{article}
\usepackage[letterpaper, nohead,dvips]{geometry}
\usepackage{amsmath}
\usepackage{amsfonts}
\usepackage{amssymb}
\usepackage{enumerate}
\usepackage{amsthm}
\usepackage{setspace}
\usepackage[all,cmtip]{xy}
\usepackage{graphicx}

\renewcommand{\geq}{\geqslant}
\renewcommand{\vec}{\boldsymbol}
\newcommand{\rmd}{\mathrm{d}}

\newcommand{\Eset}{\mathbb{E}}
\newcommand{\Hset}{\mathbb{H}}
\newcommand{\Mset}{\mathbb{M}}
\newcommand{\Rset}{\mathbb{R}}
\newcommand{\Sset}{\mathbb{S}}

\newcommand{\cK}{\mathcal{K}}
\newcommand{\cM}{\mathcal{M}}

\newcommand{\cS}{\mathcal{S}}

\newcommand{\SE}{\mathrm{SE}}
\newcommand{\SO}{\mathrm{SO}}

\def\mathbi#1{\textbf{\em #1}}
\theoremstyle{plain}
\newtheorem{theorem}{Theorem}[section]
\newtheorem{proposition}[theorem]{Proposition}
\newtheorem{lemma}[theorem]{Lemma}
\theoremstyle{definition}
\newtheorem{definition}[theorem]{Definition}
\newtheorem{remark}[theorem]{Remark}

\numberwithin{equation}{section}
\numberwithin{figure}{section}

\begin{document}

\begin{center}\Large Equivalence problem for the orthogonal webs on the sphere\end{center}

\begin{center}\large Caroline M. Cochran\footnote{Department of
  Mathematics and Statistics, Dalhousie University, Halifax, Nova Scotia,
  B3H~3J5, Canada, email:\ adlamc@mathstat.dal.ca},
  Raymond G.\ McLenaghan\footnote{Department of Applied Mathematics, University
  of Waterloo, Waterloo, Ontario, N2L~3G1, Canada,
  email:\ rgmclena@uwaterloo.ca} and Roman G.\ Smirnov\footnote{Department of
  Mathematics and Statistics, Dalhousie University, Halifax, Nova Scotia,
  B3H~3J5, Canada, email:\ smirnov@mathstat.dal.ca}\end{center}

\begin{center}  \end{center}

\begin{quote}
  {\small \textbf{Abstract.} }

  We solve the equivalence problem for the orthogonally separable webs on the
three-sphere  under the action of  the isometry group.
This continues a classical project initiated by Olevsky in which he
solved the corresponding canonical forms problem.  The solution to the
equivalence problem together with the results by Olevsky forms a
complete solution to the problem of orthogonal separation of variables to
the Hamilton-Jacobi equation defined on the three-sphere via orthogonal
separation of variables. It  is based on invariant properties of the characteristic
Killing two-tensors in addition to properties of the corresponding algebraic curvature tensor and the associated Ricci tensor.  The result is
illustrated by a non-trivial application to a natural Hamiltonian defined
on the three-sphere.
\end{quote}


\section{Introduction} \label{sec:intro}
 This paper presents a complete solution to the {\em equivalence problem} for {\em orthogonally separable webs} (OSWs) generated by the {\em characteristic Killing tensors}  (CKTs) defined on the  three-dimensional sphere $\Sset^3$.  In addition, we lay the ground work for solving the analogous problem for three-dimensional hyperbolic space $\Hset^3$ as well as the spaces of constant non-zero curvature of higher dimensions. This work  continues and generalizes our studies of the
OSWs defined in flat (pseudo-)Riemannian spaces (see, for instance, \cite{AMS07, HMS05, HMS09} and the relevant references therein).

Reportedly, the study of
OSWs generated by Killing two-tensors was conceived  in various classical articles published throughout the 19th century. A list of such works should include, among many others, Liouville \cite{Li1846},  Neumann \cite{N1856}, Bertrand \cite{Be1857}, Morera \cite{Mo1881}, and St\"{a}ckel \cite{St1891}. The research in the area continued throughout the last century to make this study more systematic and develop new results. Of note was the celebrated 1934 paper by Eisenhart \cite{E34}. In this paper, the author completely solved the {\em canonical forms problem} for the
OSWs generated by CKTs defined in $\Eset^3$ by showing  that there were exactly
eleven inequivalent (in some sense) such
OSWs and represented each of them with the corresponding canonical form. Furthermore, he gave a general criterion linking orthogonal separability with geometric and algebraic properties of Killing two-tensors.  In addition to $\Eset^3$, he also partially solved the canonical forms problem for $\Sset^3$.  This problem, as well as the case of hyperbolic space $\Hset^3$, was later completed by Olevsky in 1950. In his paper \cite{O50}, Olevsky determined the number of orbits corresponding to the orthogonal separable coordinates on $\Sset^3$ and $\Hset^3$, listing the metrics in these coordinates and the transformations to separable coordinates in each case. Kalnins {\em et al} \cite{KMW76} completed Olevsky's work on $\Sset^3$ by also listing the CKTs (using the language of differential operators) in their respective canonical forms representing the
OSWs. In recent years, the research in the area has been  successfully continued yielding many important results in the area in connection with the study of integrable and superintegrable classical and quantum Hamiltonian systems (see, for example, \cite{BKM76, HMS05, HMS09,  Ka86, RW05, WF65} and the relevant references therein).

It must be noted, however, that many of the aforementioned investigations concerned the canonical forms problems for the CKTs in question, while the applications arising in the field of classical and quantum Hamiltonian systems often involve the properties of the CKTs in their general form as far as the action of the corresponding isometry groups are concerned (see \cite{HMS05}, for example). From this perspective, the main goal of this paper is to extend the results by Olevsky and others and solve the equivalence problem for the CKTs defined on $\Sset^3$.

The paper is organized as follows. Section 2 contains a brief review of the required mathematical tools defined in the framework of the invariant theory of Killing tensors (ITKTs) that will be used in solving the main problem. In Section 3 we determine the form of the general Killing tensor on $\Sset^n$ by considering it as an imbedded hypersurface in $\Eset^{n+1}$.  We also obtain algebraic conditions for the orthogonal integrability of the eigendirections of the CKTs. In Section 4 we formulate and solve the equivalence problems for CKTs defined on  $\Sset^3$. Section 5 is devoted to applications to problems of classical mechanics that subsume the results of the previous sections. Section 6 contains the conclusion.

\section{Invariant theory of Killing tensors}
\label{section1}
In what follows we formulate and solve the equivalence problem for the OSWs generated by CKTs defined on $\Sset^3$ within the framework of the invariant theory of Killing tensors (ITKTs), which is a natural extension of the classical invariant theory of homogeneous polynomials (see \cite{H08} for more details and relevant references). Thus, the study of OSWs is based on algebraic and geometric properties of Killing tensors two-tensors in which invariant theory comes into play as the natural link between algebra and geometry. Moreover, recently it has been shown {\em explicitly} (see \cite{HMS09} for more details and references) that in the case of Killing two-tensors generating  OSWs the study can be  naturally cast into  the general setting of Cartan's geometry \cite{EC37, EC01, PG74}.

Let $(\cM, \vec{g})$ be an $n$-dimensional (pseudo-)Riemannian manifold of
constant curvature.
\begin{definition} \label{KT}

A (contravariant) {\em Killing tensor of valence $p$}  defined in
  $(\cM, \vec{g})$ is a symmetric $(p,0)$ tensor field satisfying the Killing
  tensor equation
  \begin{equation}
    [\vec{K}, \vec{g}] = 0, \label{KTeqn}
  \end{equation}
  where $[$ , $]$ denotes the Schouten bracket \cite{Sc40}.
  When $p=1$, $\vec{K}$ is said to be a Killing vector field (infinitesimal
  isometry) and (\ref{KTeqn}) reduces to
  \begin{equation}
    \mathcal{L}_{\vec{K}} \vec{g} = 0, \label{eq:HJ:KVeqn}
  \end{equation}
  where $\mathcal{L}$ denotes the the Lie derivative operator.

\end{definition}
Since the Schouten bracket $[$ , $]$ is $\mathbb{R}$-bilinear, the set of solutions
to the system of overdetermined partial differential equations (PDEs) given by
(\ref{KTeqn}) forms a vector space over $\mathbb{R}$. Furthermore, since
$(\cM, \vec{g})$ is of constant curvature, the dimension of such a vector
space is maximal (see the relevant references in \cite{HMS05} for more
details). In what follows, we shall use the notation $\cK^{p}(\cM)$ to denote
the vector space of valence $p$ Killing tensor fields defined on $\cM$.
\begin{remark} \label{C_KT}

Note that the equation (\ref{KTeqn}) can be equivalently rewritten in the more familiar (covariant) form as follows:

\begin{equation} \label{CKTeqn}
K_{(i_1,\ldots,i_p;i_{p+1})} = 0,
\end{equation}
where $K_{i_1,\ldots,i_p}$ denotes the covariant components of the Killing tensor $\vec{K}$, $;$ the covariant derivative with respect to the Levi-Civita connection defined by $\vec{g}$, and $(\ldots)$ symmetrization over the enclosed indices
\end{remark}
As is well-known the Killing tensors defined on spaces of constant curvature are sums of symmetrized tensor products of Killing vectors forming finite-dimensional  vector spaces. Denote by $d$ the dimension of the vector space  $\cK^{p}(\cM)$ and recall that $d$ is given by
\begin{equation}
  d = \dim \mathcal{K}^{p}(\cM) = \frac{1}{n} \binom{n+p}{p+1} \binom{n+p-1}{p},
    \quad p \geq 1. \label{dimKT}
\end{equation}
Therefore the general element of $\cK^{p}(\cM)$ may be represented by $d$ arbitrary parameters $a^{1}, \ldots, a^{d}$
with respect to a given basis.  Let $G$ denote the isometry group of $(\cM,\vec{g})$.  
Our next observation is that the group $G$ acting
on $\cM$ induces, via the pushforward map, a linear transformation of $\cK^{p}(\cM)$ which defines a representation of $G$ \cite{MMS2004}. 
Moreover, the action $G \circlearrowright \cK^{2}(\cM)$ is not transitive. These observations are the crux of the invariant theory of Killing tensors, allowing us to formulate the canonical forms and
equivalence problems for the Killing tensors defined on spaces of constant curvature.

Of particular importance for applications are the elements of $\cK^{2}(\cM)$ enjoying additional geometric and algebraic properties. More specifically, let $\vec{K} \in \cK^{2}(\cM)$ be such that its eigenvalues are pointwise simple and real, and the eigenvector fields are  normal (orthogonally integrable). Such a Killing tensor $\vec{K}$ is called a {\em characteristic Killing tensor} (CKT) \cite{E34, Be97}. The remarkable property of CKTs is that their eigenvalues and eigenvectors generate
OSWs, which are $n$ foliations of the space $\cM$ that consist of $(n-1)$-dimensional hypersurfaces orthogonal to the eigenvectors of the CKT in question.

The action $G \circlearrowright \cK^{2}(\cM)$, thus defined, foliates the vector space $\cK^{2}(\cM)$ into the orbit space $\cK^{2}(\cM)/G$. These orbits represent  Killing tensors of valence two that share the geometric (and algebraic) properties that are equivalent modulo the group action $G \circlearrowright \cK^{2}(\cM)$. Of particular importance in applications are the orbits  that correspond to the CKTs belonging to the vector space $\cK^{2}(\cM)$. It must be noted at this point that, in general, the topology of the orbit space $\cK^{2}(\cM)/G$ (or $(\cK^{2}(\cM)\times \cM)/G$) is far from being trivial which makes the problem of {\em invariant classification} of the orbits fairly complicated. The problem consists of two ``subproblems," namely the {\em canonical forms problem} and the {\em equivalence problem} which can briefly be formulated in this setting as follows.
\begin{enumerate}
\item {\bf  Canonical forms problem:} Consider the action $G \circlearrowright \cK^{2}(\cM)$. The problem is to determine the number of  {\em inequivalent} orbits corresponding to the CKTs defined on $(\cM, \vec{g})$ as well as   the canonical forms representing each of them.

\item {\bf Equivalence problem:}  Consider again the action $G \circlearrowright \cK^{2}(\cM)$. Let $\vec{K} \in \cK^{2}(\cM)$. First, the problem is  to deterimine whether or not $\vec{K}$ is a CKT. If the answer is ``yes'', the main problem is to determine the corresponding   orbit in the quotient space $\cK^{2}(\cM)/G$ that the Killing two-tensor in question $\vec{K}$ belongs to.  Finally, we also want to determine the {\em moving frames map} \cite{FO98, FO99} that maps $\vec{K}$ to its respective canonical form.

\end{enumerate}
Recall that Eisenhart \cite{E34} outlined the solution to the canonical forms problem for CKTs formulated above and solved it for the case $\cM = \Eset^3$. In \cite{HMS09} (see also the relevant references therein) Horwood {\em et al} reformulated Eisenhart's approach in the language of the Cartan geometry (which Eisenhart employed implicitly). Note that the canonical forms problems for the cases $\cM = \Sset^3$ and $\cM = \Hset^3$ were solved by Olevsky in \cite{O50}, while the case $\cM = \Mset^3$ was treated by Horwood and McLenaghan in \cite{HM07}. Moreover, these solutions have been used to solve the corresponding equivalent problems. Thus, Horwood {\em et al} \cite{HMS05} and Horwood \cite{JTH07} employed two different methods to solve the equivalence problem for the CKTs defined on $\Eset^3$, while Horwood {\em at al} \cite{HMS09} solved the equivalence problem for the case $\cM = \Mset^3$.

In order to set the stage  for our theory and in what follows  solve the equivalence problem for $\cK^{2}(\cS)$, we now employ  the fundamental ideas from Cartan's approach to geometry. Observe first that
  $\cM \simeq G/H$, where $G$ is the isometry group of $\cM$ and $H$ is a closed subgroup of $G$. Thus, for example, $\Sset^3 \simeq
  SO(4)/SO(3)$. Next, we observe that in view of this identification the homogeneous space $G/H$ can be treated as the base manifold in the tautological principal bundle projection $\pi_{1}: G \rightarrow G/H \simeq \cM$ (i.e.\ $G$ is the principal $H$-bundle over $G/H$). Similarly,
  consider the vector bundle $\mathcal{K}^p(\cM)\times \cM$, then  $\pi_2 \circ \pi_{2} : \cK^{p}(\cM) \times \cM \rightarrow \cM \simeq G/H
$ is the vector bundle projection with the same base manifold. Upon noticing that  the transitive action of the isometry group  $I(M) = G$
in $\cM$ yields the non-transitive action of $G$ in  $\mathcal{K}^p(\cM)$, where $G$ acts as an automorphism, we consider next the orbit space
$(\mathcal{K}^p(\cM)\times \cM)/G$ leading to the third projection $ \pi_{3}: \cK^{p}(\cM) \times \cM \rightarrow (\cK^{p}(\cM) \times \cM) / G$,
having the structure of a principle $G$-bundle with $\mathcal{K}^p(\cM)\times \cM$ as the total  space. Following \cite{HMS09}, we introduce
the {\em lift} $f: (\mathcal{K}^2(\cM)\times \cM)/G \rightarrow G$, so that   the following diagram commutes:

\begin{equation}
  \xymatrix{
    G \ar[r]^-{\pi_{1}} & G/H \simeq \cM &  \\
    (\cK^{p}(\cM) \times \cM) / G \ar[u]^-{f}
      & \cK^{p}(\cM) \times \cM \ar[l]^-{\pi_{3}} \ar[u]_-{\pi_{2}}
       &
  } \label{D1CartanGeometry}
\end{equation}

Geometrically, the existence of $f$ is equivalent to the existence of a cross-section through the orbits  of the orbit-space $(\mathcal{K}^p(\cM)\times \cM)/G$. The intersections of such a cross-section with the orbits are the corresponding {\em canonical forms}, and their coordinates are {\em covariants (invariants)} which play an important role in the considerations that follow.

Consider now the case of $p=2$.

Let
$\vec{K} \in \cK^{2}(\cM)$ be a CKT at a non-singular point $\vec{x} \in \cM$ (i.e.\ the
eigenvalues of $\vec{K}$ are all real and distinct at $\vec{x}$). Indeed,
$\vec{K}$ gives rise to a quasi-orthonormal frame $E_{\vec{K}, \vec{x}}(\cM)$ of
eigenvectors $\{\vec{e}_{1}, \ldots, \vec{e}_{n}\}$ of $\vec{K}$ at $\vec{x}
\in \cM$, which is also a quasi-orthonormal basis for $T_{\vec{x}}(\cM)$.
Denoting $E(\cM)$ as the corresponding bundle of frames generated by CKTs in
$\cM$, it follows that $(E(\cM), \cM, \tilde{\pi}_{2})$ defines an (oriented)
quasi-orthonormal frame bundle, where $
  \tilde{\pi}_{2}: E(\cM) \rightarrow \cM.$ The fibres $\tilde{\pi}_{2}{}^{-1}(\vec{x})$ correspond to sets of all possible quasi-orthonormal frames at $\vec{x} \in \cM$ generated by the eigenvectors of CKTs. Finally, this arrangement leads to   the fibre bundle projection $
  \tilde{\pi}_{3} : \cK^{2}(\cM) \times \cM \rightarrow E(\cM).$  Accordingly,  our diagram (\ref{D1CartanGeometry}) now assumes the following form:

 \begin{equation}
  \xymatrix{
    G \ar[r]^-{\pi_{1}} & G/H \simeq \cM & \\
    (\cK^{2}(\cM) \times \cM) / G \ar[u]^-{f}
      & \cK^{2}(\cM) \times \cM \ar[l]^-{\pi_{3}} \ar[u]_-{\pi_{2}}
      \ar[r]_-{\tilde{\pi}_{3}} & E(\cM) \ar[ul]_-{\tilde{\pi}_{2}}
  } \label{D2CartanGeometry}
\end{equation}
The existence of such a lift $f$ is assured by the fact that $G$ acts
transitively on the bundle of frames $E(\cM)$ for a given CKT $\vec{K} \in
\cK^{2}(\cM)$ and the classical Cartan lemma \cite{PG74}:
\begin{lemma}[Cartan] \label{lemma:Cartan}
  Suppose that $\varphi$ is a $\mathfrak{g}$-valued one-form on a connected (or
  simply connected) manifold $\cM$. Then there exists a $C^{\infty}$ map
  $F : \cM \rightarrow G$ with $F^{\ast} \omega = \varphi$ iff
  $$
    \rmd \varphi = \varphi \wedge \varphi,
  $$
  where $\omega$ is the Maurer-Cartan form on $G$.
  Moreover, the resulting map is unique up to left translation.
\end{lemma}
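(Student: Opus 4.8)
The plan is to prove the two implications separately. The forward (necessity) direction is a one-line pullback computation, while the reverse (sufficiency) direction rests on the Frobenius integrability theorem applied to a distribution built canonically on the product $\cM \times G$.

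For necessity, recall that the Maurer--Cartan form obeys the structure equation $\rmd\omega = \omega \wedge \omega$. Hence if a smooth $F:\cM \to G$ with $F^{\ast}\omega = \varphi$ exists, then, since pullback commutes with both $\rmd$ and $\wedge$,
\[
  \rmd\varphi = F^{\ast}(\rmd\omega) = F^{\ast}(\omega \wedge \omega) = (F^{\ast}\omega)\wedge(F^{\ast}\omega) = \varphi \wedge \varphi.
\]

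For sufficiency, assume $\rmd\varphi = \varphi\wedge\varphi$. On $\cM\times G$, with projections $p:\cM\times G\to\cM$ and $q:\cM\times G\to G$, I would introduce the $\mathfrak{g}$-valued one-form $\Theta = q^{\ast}\omega - p^{\ast}\varphi$. Since $\omega$ restricts to a linear isomorphism on each tangent space of $G$, the equation $\Theta = 0$ cuts out a distribution $\mathcal{D} = \ker\Theta$ of rank $n = \dim\cM$ that is everywhere transverse to the fibres $\{x\}\times G$; in particular $p$ restricts to a local diffeomorphism on any integral leaf. The heart of the argument is to verify that $\mathcal{D}$ is involutive. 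Writing $A = q^{\ast}\omega$ and $B = p^{\ast}\varphi$, so that $\Theta = A - B$, the structure equation and the hypothesis give $\rmd\Theta = A\wedge A - B\wedge B$; substituting $A = B + \Theta$ yields
\[
  \rmd\Theta = \Theta\wedge\Theta + \Theta\wedge B + B\wedge\Theta,
\]
so every term carries a factor of $\Theta$. Thus the components of $\Theta$ generate a differential ideal, and $\mathcal{D}$ is integrable by the Frobenius theorem.

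The remaining, and hardest, step is to promote a local integral leaf to the graph of a globally defined map, and this is exactly where simple-connectedness of $\cM$ becomes indispensable. Left translation $L_h : (x,g)\mapsto(x,hg)$ preserves $\omega$ by left-invariance of the Maurer--Cartan form, hence preserves $\Theta$ and the foliation it defines; since $G$ thereby permutes the leaves and acts simply transitively on each fibre $\{x\}\times G$, the projection $p|_{L}:L\to\cM$ of any leaf is a covering map. Because $\cM$ is simply connected, $p|_{L}$ is a diffeomorphism, so $L$ is the graph of a smooth map $F = q\circ(p|_{L})^{-1}:\cM\to G$. That $L$ is an integral leaf means $\Theta$ vanishes on $L$, which pulls back along the graph to $F^{\ast}\omega = \varphi$, as required. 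Finally, uniqueness up to left translation follows from the same picture: two solutions are graphs of integral leaves $L_1,L_2$; fixing a point $x_0$ and setting $h = F_2(x_0)F_1(x_0)^{-1}$, the leaves $L_h(L_1)$ and $L_2$ pass through the common point $(x_0,F_2(x_0))$, so they coincide by the uniqueness clause of the Frobenius theorem, giving $F_2 = h\,F_1$ throughout the connected manifold $\cM$.
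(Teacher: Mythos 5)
The paper contains no proof of this lemma to compare against: it is quoted as ``the classical Cartan lemma'' with a citation to Griffiths \cite{PG74}, and is used as a black box to justify the lift $f$ in the bundle diagram. Your argument is the standard proof of this classical result (as in Griffiths, Sharpe, or Kobayashi--Nomizu), and it is correct: necessity is the pullback computation from the Maurer--Cartan structure equation; sufficiency applies Frobenius to $\mathcal{D} = \ker(q^{\ast}\omega - p^{\ast}\varphi)$ on $\cM \times G$, where the rank and fibre-transversality of $\mathcal{D}$ follow from $\omega$ being a pointwise isomorphism on $T_{g}G$, and your identity $\rmd\Theta = \Theta\wedge\Theta + \Theta\wedge B + B\wedge\Theta$ is exactly what exhibits the components of $\Theta$ as generating a differential ideal. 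You also correctly allocate the hypotheses, using simple connectedness only for existence and mere connectedness for uniqueness, which matches the lemma's parenthetical phrasing.

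Two steps are compressed but standard and easily filled in. First, ``the projection of any leaf is a covering map'' deserves the explicit even-covering construction: take a local integral graph $\sigma : U \rightarrow G$ through $(x, e)$; its left translates $\{(y,\, g\,\sigma(y)) : y \in U\}$, $g \in G$, partition $p^{-1}(U)$ into integral graphs, which both exhibits the sheets over $U$ and shows $p(L)$ is closed as well as open, hence equal to the connected base. Second, ``coincide by the uniqueness clause of the Frobenius theorem'' is really an open-and-closed argument: local uniqueness of integral manifolds makes the set $\{x : h\,F_{1}(x) = F_{2}(x)\}$ open, and it is closed and nonempty, so it is all of $\cM$. One convention caveat: your two inputs about $\omega$ --- the structure equation $\rmd\omega = \omega\wedge\omega$ and left-invariance $L_{h}^{\ast}\omega = \omega$ --- hold simultaneously only under the sign convention for $\wedge$ on $\mathfrak{g}$-valued forms implicit in the paper's statement (with the usual matrix convention the left-invariant form satisfies $\rmd\omega = -\,\omega\wedge\omega$); since the lemma as quoted already pairs this structure equation with uniqueness up to left translation, you have simply, and consistently, followed its conventions.
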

    Furthermore, we  define the map $F : E(\cM) \rightarrow
G$ to be $F = f \circ \pi_{3} \circ \tilde{\pi}_{3}{}^{-1}$. Clearly, in this case $G$ may be identified
with the set of frames associated to the CKT $\vec{K}$ in question, or alternatively, cross-sections of the fibration
$G \rightarrow G/H$ over $\vec{K}$. In view of Lemma \ref{lemma:Cartan} the Maurer-Cartan form on $G$ can be restricted
to this choice of frames and thus produce a complete set of invariants (covariants) to solve the equivalence problem, which
one can now solve  for the orbit space $(\cK^{2}(\cM) \times \cM) / G$ (or $\cK^{2}(\cM) /
G$) for CKTs using the classical calculus of differential forms. More
specifically, the problem of invariant classification of the orbit(s)
generated by a CKT $\vec{K} \in \cK^{2}(\cM)$ reduces to fixing a
quasi-orthonormal frame of eigenvectors $\{\vec{e}_{1}, \ldots, \vec{e}_{n}\}$
and considering {\em in the frame\/} the corresponding Cartan structure
equations
\begin{align}
 & \rmd \vec{e}^{a} + \vec{\omega}^{a}{}_{b} \wedge \vec{e}^{b} = \vec{T}^{a},
     \label{eq:HJ:structeqn1} \\
 & \rmd \vec{\omega}^{a}{}_{b} + \vec{\omega}^{a}{}_{c} \wedge
   \vec{\omega}^{c}{}_{b} = \vec{\Theta}^{a}{}_{b}, \label{eq:HJ:structeqn2}
\end{align}
together with the Killing tensor equations for the components $K_{ab}$ of
$\vec{K}$ (\ref{CKTeqn})
and the integrability conditions
\begin{equation}\label{int_cond}
  \vec{e}^{a} \wedge \rmd \vec{e}^{a} = 0 \quad \text{(no sum)}.
\end{equation}
In these equations, $\vec{\omega}^{a}{}_{b} = \Gamma_{cb}{}^{a}\, \vec{e}^{c}$
are the connection one-forms, $\vec{T}^{a} = \frac{1}{2} T^{a}{}_{bc}\,
\vec{e}^{b} \wedge \vec{e}^{c}$ are the torsion two-forms,
$\vec{\Theta}^{a}{}_{b} = \frac{1}{2} R^{a}{}_{bcd}\, \vec{e}^{c} \wedge
\vec{e}^{d}$ are the curvature two-forms, $\{\vec{e}^{1}, \ldots,
\vec{e}^{n}\}$ is the dual basis of one-forms, the connection coefficients
$\Gamma_{cb}{}^{a}$ correspond to the Levi-Civita connection $\nabla$ (and
hence $\vec{T}^{a} = 0$ in (\ref{eq:HJ:structeqn1})) and $R^{a}{}_{bcd}$ are
the components of the curvature tensor. Note that (\ref{int_cond}) are the integrability
conditions for the normality of the eigenvectors $\vec{e}_a$.
We also note that, with respect to this
frame, the components of the metric $\vec{g}$ and CKT $\vec{K}$ are given by
\begin{equation}\label{gKframe}
  g_{ab} = \mathrm{diag}(\epsilon_{1}, \ldots, \epsilon_{n}), \quad
  K_{ab} = \mathrm{diag}(\epsilon_{1} \lambda_{1}, \ldots, \epsilon_{n}
    \lambda_{n}),
\end{equation}
respectively, where $\epsilon_{a} = \pm 1$, $a = 1, \ldots, n$, and
$\lambda_{a}$, $a = 1, \ldots, n$, are the eigenvalues of $\vec{K}$. The
{\em differential invariants\/} characterizing the orbits in question are
determined by the connection one-forms $\vec{\omega}^{a}{}_b$ that are
found from a fixed quasi-orthonormal frame $\{\vec{e}_{1}, \ldots,
\vec{e}_{n}\}$. More specifically, solving the Killing tensor equation (\ref{CKTeqn}) for $K_{ab}$ in this case modulo the integrability
conditions (\ref{int_cond}) will produce a set of canonical forms (thus solving the canonical forms problem) corresponding to the orbits, while finding the connection one forms $\vec{\omega}^{a}{}_b$ will provide a means to distinguish between the orbits, thus solving the equivalence problem (or, rather, most of it).  However, at this point we would like to point out that solving the equivalence problem
in this way (i.e., by thus finding a complete set of differential invariants) may be extremely challenging computationally. Instead, one can make use of the fact that the group $G$ acts transitively on the bundle of frames and  try to solve it ``in the group", employing an algebraic approach to Cartan's method of moving frames \cite{FO98,FO99,Olv99}. More specifically, via the tensor transformation laws, we can determine the action of $G$ on the parameters $a_1, \ldots, a_d$ (and the local coordinates of $\cM$ if necessary) that determine the vector space  $\cK^{2}(\cM)$ (or the product space $\cK^{2}(\cM)\times \cM$)  and then find a set of complete invariants (covariants) as algebraic functions of the parameters (parameters and local coordinates). This approach (analogous to the classical invariant theory of homogeneous polynomials) is also more preferable from the applications point of view, because the Killing two-tensors in Classical Mechanics (for example) normally appear in terms of local coordinates on $\cM$, rather than in the frame of its eigenvectors as in (\ref{gKframe}).

 The concept of a {\em web symmetry}, which signifies that the web is
invariant under at least a one-parameter group of isometries, is another
interesting geometric consequence stemming from the properties of covariants
(eigenvalues) of CKTs. Suppose, for example, $\vec{K}$ is a CKT in
$\cK^{2}(\cM)$ having $n-1$ functionally independent eigenvalues $\lambda_{i}$,
$i = 1, \ldots, n-1$. It follows that there exists a vector field $\vec{V} \in
\cK^{1}(\cM)$ ($\vec{V}$ is necessarily a Killing vector), such that
$\mathcal{L}_{\vec{V}}(\vec{K}) = 0$. Indeed, the integral curves of $\vec{V}$
are given by the common level sets
$$
  \bigcap_{i=1}^{n-1} \{\lambda_{i} = \text{const}\}.
$$

Now we can be more specific about solving the equivalence problem outlined above. Let  $\vec{K} \in \cK^{2}(\cM)$. We first verify whether or not the Killing two-tensor $\vec{K}$ in question is a CKT.  This requires verifying that the eigenvalues of $\vec{K}$ are real and distinct, and that the eigenvector fields of $\vec{K}$ are normal.
The latter step can be accomplished by verifying the vanishing of the {\em Haantjes tensor} ($H$ condition)
\cite{H51} (see \cite{H08} for more details):

\begin{equation}
\label{Haantjes1}
\vec{H}_{\hat{\vec{K}}} (X, Y):= \hat{\vec{K}}^2\vec{N}_{\hat{\vec{K}}}(X,Y) + \vec{N}_{\hat{\vec{K}}}(\hat{\vec{K}}X,\hat{\vec{K}}Y) - \hat{\vec{K}}(\vec{N}_{\hat{\vec{K}}}(X,\hat{\vec{K}}Y) +
\vec{N}_{\hat{\vec{K}}}(\hat{\vec{K}}X,Y))=0,
\end{equation}
where the $(1,1)$-tensor $\hat{\vec{K}} = \vec{K}\vec{g}^{-1}$, $\vec{N}_{\hat{\vec{K}}}$ is the Nijenhuis tensor \cite{N51} and $X, Y$ are arbitrary vector fields on $\cM$. Thus, the vanishing of the Haantjes tensor defined by (\ref{Haantjes1}) is equivalent to the Killing tensor $\vec{K}$ being a CKT. Alternatively, the formula (\ref{Haantjes1}) can be given in index form as follows:

\begin{equation}
\label{Haantjes2}
H^{i}{}_{jk}  = N^{i}{}_{\ell m}{K}^{\ell}{}_{j}{K}^{m}{}_{k} +2N^{\ell}{}_{m[j}{K}^{m}{}_{k]}{K}^{i}{}_{\ell} +  N^{\ell}{}_{jk}{K}^{m}{}_{\ell}{K}^{k}{}_{m}=0,
\end{equation}
where $K^{i}{}_{j}$ denotes the components of the $(1,1)$-tensor $\vec{\hat{K}}$ and $[\ldots]$ denotes skew symmetrization over the enclosed indices.

One can also employ the Tonolo-Schouten-Nijenhuis (TSN) conditions to verify the normality of the eigenvectors of a symmetric tensor field $\vec{K}$ of valence (0,2). Indeed, $\vec{K}$ with real distinct eigenvalues has normal eigenvectors iff the following conditions are satisfied:
\begin{align}\begin{split}
  & N^{\ell}{}_{[ij} g_{k]\ell} = 0, \\
  & N^{\ell}{}_{[ij} K_{k]\ell} = 0, \\
  & N^{\ell}{}_{[ij} K_{k]m} {K}^{m}{}_{\ell} = 0,
\end{split}\label{eq:ITKT:tsn}\end{align}
where $N^{i}{}_{jk}$ are the components of the Nijenhuis tensor \cite{N51} of
$K^{i}{}_{j}$ defined by
\begin{equation}
  N^{i}{}_{jk} = K^{i}{}_{\ell} K^{\ell}{}_{[j,k]} + K^{\ell}{}_{[j}
    K^{i}{}_{k],\ell} . \label{eq:ITKT:nijenhuis}
\end{equation}
In the index-free form, formula (\ref{eq:ITKT:nijenhuis}) can be obtained from  the formula (\ref{Haantjes1}) by replacing the operator $N_{\hat{\vec{K}}}(\cdot, \cdot)$ with the Lie bracket $[$ , $]$.
\begin{remark}
If the eigenvalues of $\vec{K}$ are real and distinct, the H condition and the TSN conditions are equivalent.  If the multiplicity of some the eigenvalues of $\vec{K}$ is greater than one,
the TSN conditions no longer apply.  However, in this case the following more general result \cite{H51} needed in Section 4 is true:  {\em the distributions defined by the eigenspaces of $\vec{K}$ in a Riemannian space are orthogonally integrable if and only if
the H condition is satisfied.} Recall that {\em orthogonal integrability} signifies that the distribution defined by the orthogonal complement of the
eigenspace is integrable.
\end{remark}

If $\vec{K} \in \cK^{2}(\cM)$ is a CKT, our next step is to determine which orbit in $\cK^{2}(\cM)/G$ it belongs to, noting that the corresponding canonical form for the orbit in question is known from solving the canonical forms problem. This problem can in principle be solved by employing a number of mathematical tricks and tools, although it is hard in general due to the fact that the group action
$G \circlearrowright \cK^{2}(\cM)$ is not regular and thus the orbits have different dimensions.

To solve the problem, one can employ {\em algebraic invariants} of the group action $G \circlearrowright \cK^{2}(\cM)$.  These are the invariants of  the group $G$ acting in the {\em parameter space} of the vector space $\cK^{2}(\cM)$, or more specifically, functions of the parameters that remain unchanged under the induced action of the group $G$ (see, for example, \cite{HMS05,  WF65} for more details). It must be noted however, that the problem can normally be solved in relatively simple cases using only algebraic invariants, that is when one has to deal with few types of orbits (e.g., $\dim\cM$ is small). Recall that algebraic invariants of the CKTs were first introduced by Winernitz and Fri\v{s} \cite{WF65} and later rediscovered in McLenaghan {\em et al} \cite{MST02} in a different context. More generally, one has to recover more information about the orbits and their degeneracies. One way to deal with the problem is to ``improve'' the group action by considering it in the extended space $\cK^{2}(\cM) \times \cM$. After appropriate computations, this yields {\em algebraic covariants} of Killing tensors, which are functions of both the parameters of the vector space $\cK^{2}(\cM)$ and coordinates of $\cM$. The covariants of Killing tensors were introduced by Smirnov and Yue in \cite{SY04} and have been successfully employed to solve a number of equivalence problems (see, for example, Horwood \cite{JTH07}). The degeneracies of the orbits in $\cK^{2}(\cM)/G$ (or $(\cK^{2}(\cM)\times \cM)/G)$ manifest themselves in the orbits having different dimensions - a fact which for our problem  has appropriate group theoretical, geometric and algebraic interpretations. Thus recall that in general  the dimension of an orbit is determined by the formula
$\mathrm{dim}\, G = \mathrm{dim}\, O_{\vec{x}} + \mathrm{dim}\, G_{\vec{x}}$, where $G$ is a group acting on $\cM$, $\vec{x} \in \cM$, $O_{\vec{x}}$ is the
orbit through $\vec{x}$, and $G_{\vec{x}}$ is the isotropy subgroup of $G$
through $\vec{x}$. The existence of isotropy subgroups indicates the degeneracies of the orbits (i.e., their dimensions drop), which in our case of $\cM = \cK^{2}(\cM)/G$ from the geometric perspective
is equivalent to the CKTs corresponding to the degenerate orbits admitting {\em web symmetries} (see below). Algebraically this means that the CKTs corresponding to the degenerate orbits have functionally {\em dependent} eigenvalues. Note that if all the eigenvalues of a CKT $\vec{K}$ are functionally {\em independent} they can be taken as the orthogonal coordinates (see \cite{HMS09} for more details). Degeneracies of the orbits can also be characterized by {\em singular points} of the CKTs in question, that is the points where the eigenvalues coincide.  Unfortunately, most of the orthogonal coordinate webs used in applications come from the CKTs with degeneracies.

Solving the equivalence and canonical forms problems described above forms the mathematical foundation for applications arising in Classical Mechanics, in particular in the Hamilton-Jacobi theory of orthogonal separation of variables for the natural Hamiltonians defined in spaces of constant curvature (see \cite{HMS09} for relevant references and details). Let a Hamiltonian system be defined by a natural Hamiltonian of the form:
\begin{equation}
\label{Ham}
H(\vec{q}, \vec{p}) = \frac{1}{2}g^{ij}(\vec{q})p_ip_j + V(\vec{q}).
\end{equation}
The following theorem (Benenti \cite{Be97}), which is a generalization of the corresponding result due to Eisenhart concerning geodesic Hamiltonians \cite{E34}, establishes an important link between orthogonal separation of variables in the associated Hamilton-Jacobi equation for the Hamiltonian system defined by (\ref{Ham}).
\begin{theorem}
\label{Be}
A Hamiltonian system defined by (\ref{Ham}) is orthogonal separable iff there exists a CKT $\vec{K}$ such that
\begin{equation}
\label{compatibility}
\mbox{d}(\hat{\vec{K}}\mbox{d}V) = 0,
\end{equation}
where the $(1,1)$-tensor $\hat{\vec{K}} = \vec{K}\vec{g}$
\end{theorem}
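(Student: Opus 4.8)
The plan is to reduce the statement to a computation in the orthogonal coordinates adapted to the eigenvectors of the CKT, thereby splitting the separability of $H$ into a purely metric (geodesic) part, which is governed by Eisenhart's classical characterization already invoked above, and a potential part, which is exactly what the compatibility condition $\rmd(\hat{\vec{K}}\,\rmd V)=0$ encodes. Concretely, since a CKT $\vec{K}$ has pointwise-simple real eigenvalues and normal eigenvectors, the vanishing of its Haantjes tensor (equivalently the TSN conditions) guarantees the existence of orthogonal coordinates $(q^1,\dots,q^n)$ in which both $\vec{g}$ and the $(1,1)$-tensor $\hat{\vec{K}}=\vec{K}\vec{g}$ are simultaneously diagonal, with $\hat{\vec{K}}=\mathrm{diag}(\lambda_1,\dots,\lambda_n)$. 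In these coordinates the geodesic Hamiltonian $\tfrac12 g^{ij}p_ip_j$ is already separable by Eisenhart's theorem, so the whole burden of the proof falls on showing that $V$ separates in the same coordinates if and only if $\rmd(\hat{\vec{K}}\,\rmd V)=0$.

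For the sufficiency direction I would proceed as follows. First, write $\hat{\vec{K}}\,\rmd V=\sum_j\lambda_j(\partial_j V)\,\rmd q^j$ and compute its exterior derivative; setting the coefficient of $\rmd q^i\wedge \rmd q^j$ to zero gives, for $i\ne j$,
\[
(\lambda_j-\lambda_i)\,\partial_i\partial_j V=(\partial_j\lambda_i)\,\partial_i V-(\partial_i\lambda_j)\,\partial_j V. \tag{$\ast$}
\]
Second, I would use that $\vec{K}$ is a Killing tensor diagonal in these orthogonal coordinates to extract the intrinsic relation between its eigenvalues and the metric coefficients, namely $\partial_j\lambda_i=(\lambda_i-\lambda_j)\,\partial_j\ln g_{ii}$ for $i\ne j$, which follows by specializing the Killing equations (\ref{CKTeqn}) to the diagonal form $K_{ab}=\lambda_a g_{ab}$. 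Third, substituting this relation into ($\ast$) and dividing by the nonvanishing factor $\lambda_j-\lambda_i$ (here pointwise-simple eigenvalues are essential) collapses ($\ast$) to the classical St\"{a}ckel--Eisenhart potential-separability condition
\[
\partial_i\partial_j V=(\partial_j\ln g^{ii})\,\partial_i V+(\partial_i\ln g^{jj})\,\partial_j V,\qquad i\ne j,
\]
which is precisely the integrability condition for $V$ to admit an additively separated form. Combining the separated geodesic part with the separated potential then yields orthogonal separability of $H$.

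The necessity direction runs the same chain in reverse: orthogonal separability of $H$ forces the metric into St\"{a}ckel form and produces, via Eisenhart's construction, a CKT $\vec{K}$ diagonal in the separable coordinates, while additive separability of $V$ is equivalent to the classical condition above; reversing the algebra through the eigenvalue relation recovers ($\ast$), hence $\rmd(\hat{\vec{K}}\,\rmd V)=0$. A clean invariant reformulation worth recording is that, on a simply connected domain, $\rmd(\hat{\vec{K}}\,\rmd V)=0$ holds precisely when $\hat{\vec{K}}\,\rmd V=\rmd W$ for some function $W$, which is in turn equivalent to $F=\tfrac12 K^{ij}p_ip_j+W$ Poisson-commuting with $H$; this ties the compatibility condition to the existence of a quadratic first integral and gives the most transparent route for the necessity step.

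I expect the main obstacle to be the second step of the sufficiency argument: establishing the eigenvalue identity $\partial_j\lambda_i=(\lambda_i-\lambda_j)\,\partial_j\ln g_{ii}$ and confirming that the separable coordinates genuinely coincide with the eigendirections of $\hat{\vec{K}}$. The latter is exactly where the CKT hypothesis does its work — normality of the eigenvectors (the Haantjes/TSN conditions recorded above) is what makes the eigendirections surface-forming and hence usable as coordinate curves, while simplicity of the eigenvalues is what licenses the division by $\lambda_j-\lambda_i$. The remaining manipulations are routine exterior-calculus and Killing-equation bookkeeping.
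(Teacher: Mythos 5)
The paper contains no proof of Theorem \ref{Be}: it is quoted from Benenti \cite{Be97} as the extension to natural Hamiltonians of Eisenhart's geodesic separability criterion \cite{E34}, so there is no internal argument to compare yours against. Judged on its own merits, your sketch is correct and is essentially the standard (Benenti-style) proof. The computations check out: with $\hat{\vec{K}}=\mathrm{diag}(\lambda_1,\ldots,\lambda_n)$ in the adapted orthogonal coordinates, the coefficient of $\rmd q^i\wedge \rmd q^j$ in $\rmd(\hat{\vec{K}}\,\rmd V)$ is indeed your $(\ast)$; the Killing tensor equation (\ref{CKTeqn}) specialized to $K_{ij}=\lambda_i g_{ii}\delta_{ij}$ (no sum) does yield the Killing--Eisenhart relations $\partial_j\lambda_i=(\lambda_i-\lambda_j)\,\partial_j\ln g_{ii}$ for $i\neq j$; and substituting into $(\ast)$, dividing by $\lambda_j-\lambda_i\neq 0$, and using $\ln g^{ii}=-\ln g_{ii}$ produces exactly the St\"{a}ckel multiplier condition $\partial_i\partial_j V=(\partial_j\ln g^{ii})\,\partial_i V+(\partial_i\ln g^{jj})\,\partial_j V$, which is the Levi-Civita condition for the potential. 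Your closing reformulation --- $\rmd(\hat{\vec{K}}\,\rmd V)=0$ iff locally $\hat{\vec{K}}\,\rmd V=\rmd W$ iff $\tfrac{1}{2}K^{ij}p_ip_j+W$ Poisson-commutes with $H$ --- is also consistent with how the paper itself uses the theorem: equation (\ref{int2}), $\partial U/\partial x^i=g_{ij}K^{j\ell}\,\partial V/\partial x^{\ell}$, is precisely this first-integral condition.

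Two points would need to be made explicit in a full write-up. First, in the necessity direction you assert that orthogonal separability of $H$ ``forces the metric into St\"{a}ckel form''; the hinge here is that the Levi-Civita separability conditions for a natural Hamiltonian split by homogeneity degree in the momenta, so the geodesic (St\"{a}ckel) conditions and the potential condition must hold independently --- this is standard but is the step that licenses invoking Eisenhart's construction of the CKT. Second, you correctly identify that Eisenhart's theorem carries the entire geometric burden (normality of the eigenvectors makes the eigendirections surface-forming, and the Killing--Eisenhart system with simple eigenvalues yields the St\"{a}ckel form of the metric in the adapted coordinates); since the paper frames Theorem \ref{Be} exactly as Benenti's generalization of that result, your division of labour --- Eisenhart for the quadratic part, (\ref{compatibility}) for the potential --- matches the intended logical structure of the cited proof.
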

In light of Theorem \ref{Be} the canonical and equivalence problems formulated above can be reformulated in the language of orthogonal separation of variables for natural Hamiltonian systems defined by (\ref{Ham}) in spaces of constant cutvature as follows:
\begin{itemize}
  \item[(i)] How many ``inequivalent'' coordinate systems afford orthogonal
    separation of variables in the corresponding HJ equation?
  \item[(ii)] If the answer to (i) is non-zero, how can one characterize
    intrinsically the coordinate systems that afford separation of variables in
    the HJ equation?
  \item[(iii)] What are the canonical coordinate transformations
    $$
      (q^{1}, q^{2}, \ldots, q^{n}) \rightarrow  (u^{1}, u^{2}, \ldots, u^{n})
    $$
    from the given position coordinates of (\ref{Ham}) to the
    coordinate systems that afford orthogonal separation of variables of the
    HJ equation?
\end{itemize}

In what follows we employ some of the tools and techniques described  above to  solve the equivalence problem for the CKTs defined on  $\Sset^3$, employing the results obtained by Olevsky \cite{O50} and Kalnins {\em et al} \cite{KMW76} who solved the corresponding canonical forms problem.


\section{Killing two-tensors on $\Sset^n \subset \Eset^{n+1}$}

To fully explore the algebraic and geometric properties of Killing two-tensors defined in a sphere, we employ the (Cartesian) coordinates of the corresponding ambient Euclidean space. In this view the vector space of Killing two-tensors ${\cal K}^2 (\Sset^n)$ is viewed as a subspace of
${\cal K}^2(\Eset^{n+1})$ whose elements are defined in terms of the Cartesian coordinates of $\Eset^{n+1}$.

Indeed, since every Killing tensor defined in a space of constant curvature  is expressible as a sum of symmetrized products of Killing vectors, let us begin by defining basic Killing vectors of a Euclidean space $\Eset^m$ in terms of the corresponding Cartesian coordinates.  For our purposes, we  also need to define the dilatational vector field $\vec{D}$ (Euler vector field).  Thus, the translational $\vec{X}_i$ and rotational $\vec{R}_{ij}$ Killing vectors and the vector $\vec{D}$ in terms of a system of Cartesian coordinates $x^i$ can be defined as follows:
\begin{equation}
  \vec{X}_{i} = \frac{\partial}{\partial x^{i}}, \quad
  \vec{R}_{ij} = 2\delta^{k\ell}_{ij}g_{\ell m}x^{m} \vec{X}_{k}, \quad
  \vec{D} = x^{i}\vec{X}_{i},
    \label{KVbasis}\end{equation}
where $g_{ij}=\mbox{diag}(1,\ldots,1)$, denotes the Euclidean metric of the Euclidean space in question and $\delta^{k\ell}_{ij}=\delta^{k}_{[i}\delta^{\ell}_{j]}$, the generalized Kronecker delta. Note that the translational and rotational Killing vectors
defined above form a basis for the Lie algebra ${\cal K}^1(\Eset^n)$ of the group of rigid motions of $\Eset^m$.  Next, the commutation relations among these vectors are given by
\begin{align}\begin{split}
   [\vec{X}_i,\vec{X}_j] = 0, \quad
   [\vec{X}_i,\vec{R}_{jk}] =  2\delta^{\ell m}_{jk}g_{mi}\vec{X}_{\ell}, \quad
   [\vec{X}_i,\vec{D}] = \vec{X}_i, \\
   [\vec{R}_{ij},\vec{R}_{k\ell}] = 4\delta^{mn}_{ij}\delta^{pr}_{k\ell}g_{mp}\vec{R}_{nr}, \quad
   [\vec{D},\vec{R}_{ij}] = 0.
\end{split}\label{comrels}\end{align}
The Killing vectors also satisfy the following algebraic identities ({\em syzgies}):
\begin{equation}
   \vec{X}_{[i}\odot\vec{R}_{jk]} = 0, \quad
   \vec{R}_{i[j}\odot\vec{R}_{kl]} = 0,
     \label{ident}
\end{equation}
where $\odot$ denotes the symmetric tensor product.
The general Killing vector thus has the form
\begin{equation}
\vec{K}=A^{i}\vec{X}_{i} + B^{ij}\vec{R}_{ij},
\label{KV}\end{equation}
where the constants $A^{i}$, and $B^{ij}$, called the {\em Killing vector parameters}, satisfy the symmetry relation
\begin{equation}
B^{(ij)}=0.
\label{symrels1}\end{equation}
From (\ref{KVbasis}) and (\ref{KV}) we obtain the covariant components of $\vec{K}$ with respect to the natural basis $dx^i$:
\begin{equation}
K_i=A_i+2B_{ij}x^j.
\label{KVcomp}\end{equation}

Similarly, the general valence two Killing tensor thus has the form
\begin{equation}
   \vec{K} = A^{ij}\vec{X}_i\odot\vec{X}_j + B^{ijk}\vec{X}_i\odot\vec{R}_{jk} + C^{ijk \ell}\vec{R}_{ij}\odot\vec{R}_{k\ell},
\label{v2KT}
\end{equation}
where the constants $A^{ij}$, $B^{ijk}$, and $C^{ijk\ell}$, called the {\em Killing tensor parameters} satisfy the following symmetry relations
\begin{align}\begin{split}
   A^{[ij]} = 0, \\
   B^{i(jk)} = 0, \quad
   B^{[ijk]} = 0, \\
   C^{(ij)k\ell} = 0, \quad
   C^{ijk\ell} = C^{k\ell ij}, \quad
   C^{i[jk\ell]} = 0.
\end{split}\label{symrels}\end{align}
A key  observation is  that the parameter set  $C^{ijk\ell}$ has the same symmetries as the Riemann curvature tensor.  For this reason it is sometimes called an {\em algebraic curvature tensor}.  
From (\ref{KVbasis}) and (\ref{v2KT}) we find that the covariant components of $\vec{K}$ with respect to the natural basis are given by
\begin{equation}
   K_{ij} = A_{ij} + 2B_{(ij)k}x^k + 4C_{ikj\ell}x^{k}x^{\ell}.
\label{KTcomps}\end{equation}
This result is consistent with that obtained in \cite{MMS2004} obtained by the use
of representation theory.

Now we  may determine the  form of the general Killing tensor on $\Sset^n$ by considering it as an imbedded hypersurface in $\Eset^{m}$ with $m=n+1$ defined implicitly by the equation
\begin{equation}
\vec{x}\cdot\vec{x} = 1,
\label{nsph}\end{equation}
where $\vec{x}=(x^1,\ldots,x^{n+1})$, and $\cdot$ denotes the Euclidean inner product.
Let
\begin{equation}
\vec{x} = \vec{f}(u^1,\ldots,u^n),
\label{parnsph}
\end{equation}
be a local parametrization of $\Sset^n$.  Then from (\ref{nsph}) and (\ref{parnsph}) we obtain by differentiation
\begin{equation}
\vec{x}\cdot\vec{x}_{\alpha} = 0, \quad
\vec{x}_{\alpha\beta}\cdot\vec{x} = - {g}_{\alpha\beta},
\label{normal}\end{equation}
where $\vec{x}_{\alpha}= \frac{\partial\vec{x}}{\partial u^{\alpha}}$, ($x^{i}_{\alpha}=\frac{\partial x^i}{\partial u^{\alpha}}$), and
\begin{equation}
{g}_{\alpha\beta} = \vec{x}_{\alpha}\cdot\vec{x}_{\beta},
\label{1stfundform}\end{equation}
denotes the pullback of the Euclidean metric to $\Sset^n$. 
The characterization is given by the following proposition:
\begin{proposition}\label{delongprop}
The general Killing vector and general valence two Killing tensor fields on $\Sset^n$ are given by
\begin{equation}
K_{\alpha}=x^i_{\alpha}K_i, \quad K_{\alpha\beta} = x^{i}_{\alpha}x^{j}_{\beta}K_{ij},
\label{nspKT1}\end{equation}
where
\begin{equation}
K_i=2B_{ij}x^j, \quad K_{ij}=4C_{ikj\ell}x^{k}x^{\ell}.
\label{nspKT2}\end{equation}
\end{proposition}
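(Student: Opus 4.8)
The plan is to reduce the statement to the single structural fact that the isometry group of $\Sset^n$ is the rotation group $\SO(n+1)$ acting on the ambient space $\Eset^{n+1}$, whose infinitesimal generators are exactly the rotational Killing vectors $\vec{R}_{ij}$. First I would record that each $\vec{R}_{ij}$ is tangent to the hypersurface (\ref{nsph}): a one-line computation gives $\vec{R}_{ij}(\vec{x}\cdot\vec{x}) = 0$, which is just the infinitesimal version of the fact that rotations preserve $\vec{x}\cdot\vec{x}=1$. Hence each $\vec{R}_{ij}$ restricts to a genuine vector field on $\Sset^n$ whose flow acts by isometries of the induced metric (\ref{1stfundform}), i.e.\ to an element of $\cK^1(\Sset^n)$. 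Since that induced metric is the pullback of the Euclidean metric, the induced covariant components of this restriction are obtained by the pullback $K_\alpha = x^i_\alpha K_i$, and from (\ref{KVcomp}) the rotational contribution is precisely $K_i = 2B_{ij}x^j$ with $B_{(ij)}=0$. This already produces the first formula in (\ref{nspKT1})--(\ref{nspKT2}).

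To see that nothing else survives, I would combine a dimension count with a direct reason that the translational parameters cannot contribute. The restrictions of the $\vec{R}_{ij}$ form $\binom{n+1}{2} = \dim\mathfrak{so}(n+1) = \dim\cK^1(\Sset^n)$ linearly independent Killing vectors (a combination $\sum c^{ij}\vec{R}_{ij}$ vanishes along $\Sset^n$ only when the associated antisymmetric matrix is zero), so they span $\cK^1(\Sset^n)$; consequently every Killing vector on $\Sset^n$ has the stated form and the constants $A_i$ drop out. The underlying reason the $A_i$-terms are excluded is that the pullback of a constant covector, $x^i_\alpha A_i = \partial_\alpha(A_i x^i)$, is the sphere gradient of the linear height function $A_i x^i$; on the unit sphere its symmetrized covariant derivative is proportional to $g_{\alpha\beta}$, so it is conformal Killing but fails to be Killing unless $A_i = 0$. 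By linearity of the Killing operator, the pullback of $A_i + 2B_{ij}x^j$ is Killing iff its translational part is, forcing $A_i = 0$.

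For the valence-two case I would lean on the structural fact recalled in Section~\ref{section1}, that on a space of constant curvature every Killing tensor is a sum of symmetrized products of Killing vectors. By the first part, $\cK^1(\Sset^n)$ is spanned by the $\vec{R}_{ij}$, so every $\vec{K}\in\cK^2(\Sset^n)$ is a linear combination of the symmetrized products $\vec{R}_{ij}\odot\vec{R}_{k\ell}$; in the ambient parametrization (\ref{v2KT}) this is exactly the assertion that only the term $C^{ijk\ell}\vec{R}_{ij}\odot\vec{R}_{k\ell}$ persists, the pieces $A^{ij}\vec{X}_i\odot\vec{X}_j$ and $B^{ijk}\vec{X}_i\odot\vec{R}_{jk}$ being discarded because the translational factors $\vec{X}_i$ are neither tangent to nor Killing on $\Sset^n$. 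Reading off the covariant components from (\ref{KTcomps}) with $A^{ij} = B^{ijk} = 0$ gives $K_{ij} = 4C_{ikj\ell}x^k x^\ell$ and the pullback $K_{\alpha\beta} = x^i_\alpha x^j_\beta K_{ij}$, which is the second formula; the symmetry relations (\ref{symrels}) together with the syzygies (\ref{ident}) guarantee that $C^{ijk\ell}$ carries precisely the symmetries of an algebraic curvature tensor.

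The step I expect to be the main obstacle is the completeness bookkeeping: verifying that the purely rotational symmetrized products really exhaust $\cK^2(\Sset^n)$ and impose no relations beyond the curvature-tensor symmetries of $C$. The clean quantitative check is a dimension match, namely that the number of independent components of an algebraic curvature tensor in $m = n+1$ dimensions, $\tfrac{1}{12}m^2(m^2-1) = \tfrac{1}{12}(n+1)^2\big((n+1)^2-1\big)$, coincides with $\tfrac{1}{n}\binom{n+2}{3}\binom{n+1}{2} = \dim\cK^2(\Sset^n)$ from (\ref{dimKT}), so the assignment $C^{ijk\ell}\mapsto \vec{K}$ is a bijection. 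A secondary subtlety I would flag is that the ambient representative of a sphere tensor is not unique, since on (\ref{nsph}) one has $x_i x^i_\alpha = 0$ and hence any term proportional to $x_i x_j$ pulls back to zero; the formulas in (\ref{nspKT2}) should therefore be read as the distinguished rotational representative, and it is precisely the dimension match that certifies this choice loses no information.
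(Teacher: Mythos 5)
Your proposal is correct, but it runs in essentially the opposite direction to the paper's proof, so it is worth comparing the two. The paper argues extrinsically: it first establishes the Gauss-type pullback identity (\ref{pb1}) for the covariant derivative of an ambient tensor, deduces that an ambient Killing tensor pulls back to a Killing tensor on $\Sset^n$ precisely when the tangency condition (\ref{nasc1}) holds, satisfies that condition by imposing $x^{i}K_{i}=0$ and $x^{j}K_{ij}=0$ as in (\ref{sc2}), which when applied to the general forms (\ref{KVcomp}) and (\ref{KTcomps}) forces $A_{i}=0$, $A_{ij}=0$, $B_{(ij)k}=0$ and hence (\ref{nspKT2}), and then closes with the same dimension count (\ref{KTdimnsp}) you invoke. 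You instead argue intrinsically: the restricted rotations exhaust $\cK^{1}(\Sset^n)$ (your conformal-Killing argument for excluding the height-function gradients is a nice self-contained touch the paper does not contain), the structural theorem that Killing tensors on constant-curvature spaces are sums of symmetrized products of Killing vectors gives surjectivity of $C\mapsto\vec{K}$ at valence two, the syzygy (\ref{ident}) lets you normalize $C$ to full algebraic curvature symmetry, and the dimension match upgrades surjectivity to bijectivity. Each route buys something: the paper's identity (\ref{pb1}) handles arbitrary valence $p$ uniformly, does not lean on the sums-of-products theorem (which the paper only cites as well known), and supplies pullback machinery reused later in the paper for eigenvalues and normality of CKTs; your route is arguably logically tighter on completeness, since a surjective linear map between spaces of equal finite dimension is automatically bijective, whereas the paper's parameter count tacitly assumes the parametrization $C\mapsto\vec{K}$ is injective, i.e.\ that no relations beyond (\ref{symrels}) and (\ref{ident}) appear upon restriction to the sphere --- exactly the subtlety you flag via the $x_{i}x_{j}$ ambiguity, and which your argument sidesteps. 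One cosmetic slip: your aside that the $A^{ij}$ and $B^{ijk}$ pieces are discarded ``because the $\vec{X}_{i}$ are neither tangent to nor Killing on $\Sset^n$'' is misleading if read as a statement about pullbacks, since for instance $g^{ij}\vec{X}_{i}\odot\vec{X}_{j}$ pulls back to the induced metric, a perfectly good Killing tensor on $\Sset^n$, which is in fact recovered inside the rotational family as the Casimir tensor (\ref{casimir2}); but nothing in your completeness argument depends on that remark, so it is a blemish rather than a gap.
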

\begin{proof}
The pullback to $\Sset^n$ of the covariant derivative of any smooth tensor field $K_{i_1,\ldots, i_p}$ defined on $\Eset^{n+1}$ is given by
\begin{equation}
x^{i_1}_{\alpha_1},\ldots,x^{i_p}_{\alpha_p}x^{i_{p+1}}_{\alpha_{p+1}}K_{i_1,\ldots,i_p,i_{p+1}}=K_{\alpha_1,\ldots,\alpha_p;\alpha_{p+1}}
+px^iK_{i(\alpha_2,\ldots,\alpha_{p-1}}g_{\alpha_p)\alpha_{p+1}}.
\label{pb1}\end{equation}
If $K_{i_1,\ldots,i_p}$ is any solution of the KT equation (\ref{CKTeqn}), it follows that
\begin{equation}
K_{(\alpha_1,\dots,\alpha_p;\alpha_{p+1})}+px^iK_{i(\alpha_1,\ldots,\alpha_{p-1}}g_{\alpha_p\alpha_{p+1})}=0.
\label{pb2}\end{equation}
We conclude that $\vec{K}$ will define a Killing tensor field on $\Sset^n$ if and only if
\begin{equation}
x^iK_{i(\alpha_1,\ldots,\alpha_{p-1}}g_{\alpha_p\alpha_{p+1})}=0.
\label{nasc1}\end{equation}
This condition is clearly satisfied if
\begin{equation}
x^jK_{ji_{1},\ldots,i_{p-1}}= 0,
\label{sc1}\end{equation}
on $\Eset^{n+1}$. For the cases $p=1,2$ this condition takes the form
\begin{equation}
x^iK_i=0, \quad x^jK_{ij}=0.
\label{sc2}\end{equation}
It follows from (\ref{symrels1}), (\ref{KVcomp}), (\ref{symrels}), and (\ref{KTcomps}) that
\begin{equation}
A_i=0, \quad A_{ij}=0, \quad B_{(ij)k=0}.
\label{sc3}\end{equation}
These conditions imply that $K_i$ and $K_{ij}$ have form given by (\ref{nspKT2}).  We note that, with the use of (\ref{comrels}), the conditions
(\ref{sc3}) may be written invariantly as
\begin{equation}
[\vec{D},\vec{K}]=0.
\label{sc4}\end{equation}

It remains to demonstrate that (\ref{nspKT1}) and (\ref{nspKT2}) indeed define the general Killing vector and general valence two Killing tensor on $\Sset^n$.
This may be achieved by a dimensional argument.  From (\ref{dimKT}) it follows respectively for $p=1,2$ that
\begin{equation}
d=\frac{1}{2}n(n+1), \quad \frac{1}{12}n(n+1)^2(n+2).
\label{KTdimnsp}\end{equation}
For the case $p=1$, the number $d$ is identical to the number of independent components of the Killing vector parameter $B_{ij}$,
while for the case $p=2$, $d$ is identical to the number of independent components of the algebraic curvature tensor $C_{ijk\ell}$ in a $(n+1)$-dimensional space.
\end{proof}
\begin{remark}\label{delong}
Proposition \ref{delongprop} is closely related to a result proved by Delong \cite{delong} which states that the valence $p$ Killing tensors (considered as functions on the
cotangent bundle of $\Eset^{n+1}$) that are in involution with the Euclidean distance function and the function $x^ip_i$, where $p_i$ denotes
the canonical momenta, are Killing tensors on $\Sset^n$.
Delong's conditions are equivalent to (\ref{sc2}) and (\ref{sc4}) respectively, only one of which seems to required to obtain the result.  Furthermore, our proposition goes
beyond his by giving the
explicit form of the general valence one and two KTs on $\Sset^n$ in terms of the appropriate KTs in the
ambient space $\Eset^{n+1}$.  Our proposition may be easily generalized to valence $p$ Killing tensors by the use of the results of Horwood \cite{JTH07}.
\end{remark}
\begin{remark}
Note that the image of the contravariant metric metric $g^{\alpha\beta}$ on $\Sset^{n}$ with respect to spherical coordinates under the pushforward induced by the parametrization 
gives rise to a Killing tensor $\vec{\mathcal{C}}$  called the {\em Casimir tensor}  with the property that it commutes with every element of the vector subspace of ${\cal K}^2(\Eset^{n+1})$ given by (\ref{nspKT2}). In view of the above  $\vec{\mathcal{C}} \in {\cal K}^2(\Eset^{n+1})$ is degenerate as an element of the vector subspace determined by $4C_{ikj\ell}x^{k}x^{\ell}$.  We also note that $\vec{\mathcal{C}}$ is a CKT for the spherical web on $\Sset^{n}$.  The explicit form of the Casimir tensor is given by
\begin{equation}\label{casimir1}
\vec{\mathcal{C}}=kg^{ik}g^{\ell j}\vec{R}_{ij}\odot\vec{R}_{k\ell},
\end{equation}
or with respect to the natural basis by
\begin{equation}\label{casimir2}
\mathcal{C}_{ij}=kg_{i[j}g_{\ell]k}x^kx^{\ell}
\end{equation}
where $k$ is some constant.  It may be shown by direct calculation that $[\vec{\mathcal{C}},\vec{K}]=0$, which verifies the above mentioned
commutation property.

\end{remark}

The next step is to determine the KTs among those given by (\ref{nspKT1}) and (\ref{nspKT2}) that are CKTs, that is which have point-wise distinct
eigenvalues and normal eigendirections.  To effect this determination we use the fact the eigenvalues of $K_{\alpha\beta}$ are also eigenvalues of
$K_{ij}$ and that the pushforward of the eigenvectors of  $K_{\alpha\beta}$ are eigenvectors of $K_{ij}$.  These results follow by contracting both
sides of the eigenvalue equation
\begin{equation}
K_{ij}X^{j}=\lambda g_{ij}X^{j},
\label{eveq1}\end{equation}
with $x^{i}_{\alpha}$, which yields
\begin{equation}
K_{\alpha\beta}X^{\beta}=\lambda g_{\alpha\beta}X^{\beta},
\label{eveq2}\end{equation}
where $X^{j}$ in (\ref{eveq1}) is the pushforward of $X^{\beta}$
\begin{equation}
X^{j}=x^{j}_{\beta}X^{\beta}.
\label{eveq3}\end{equation}
We conclude that if $K_{ij}$ has distinct eigenvalues then so does $K_{\alpha\beta}$.  The converse is not true since if $K_{\alpha\beta}$ has a zero eigenvector then by (\ref{sc2}) $K_{ij}$ will have zero as a repeated eigenvalue.  However, the remaining eigenvalues will be non-zero and distinct.  We also observe by (\ref{normal}) that the eigenvector $x^{i}$ corresponding the the zero eigenvalue of $K_{ij}$ pulls back to the zero vector on $\Sset^n$.

We next show that the pullback of a normal co-vector field on $\Eset^{n+1}$ is a normal vector field on $\Sset^n$.  To see this consider any
co-vector field $E_{i}$ on $\Eset^{n+1}$ which satisfies the normality condition
\begin{equation}
E_{[i,j}E_{k]}=0.
\label{normcond}\end{equation}
We note that (\ref{normcond}) is the component form of the integrability condition (\ref{int_cond}).
An easy computation shows that
\begin{equation}
E_{i,j}E_{k}x^{i}_{\alpha}x^{j}_{\beta}x^{k}_{\gamma}=E_{\alpha,\beta}E_{\gamma}-x^{i}_{\alpha\beta}E_{i}E_{\gamma}.
\label{pullb}\end{equation}
It follows from (\ref{normcond}) and (\ref{pullb}) that
\begin{equation}
E_{[\alpha,\beta}E_{\gamma]}=0,
\label{normcond2}\end{equation}
which implies that the pullback $E_{\alpha}$ is a normal vector field.
The above results imply that the pullback of a characteristic KT on $\Eset^{n+1}$ is a characteristic KT on $\Sset^{n}$.  In view of the above results
we are able to study CKTs on $\Sset^{n}$ as pullbacks of CKTs on $\Eset^{n+1}$.

We observe that the use of (\ref{int_cond})  to determine whether the eigenvector fields of $\vec{K}$ are normal, is
impractical since it requires the explicit determination of the eigenvector fields,  an intractable problem.  However, alternatively we may
utilize either the TSN conditions (\ref{eq:ITKT:tsn}) or the H condition defined by (\ref{Haantjes2}).  We first consider
the TSN conditions for a KT of the form (\ref{nspKT2}).
These conditions impose the corresponding  algebraic conditions on the algebraic curvature tensor $C$:
\begin{equation}
C^{\ell}{}_{(pq[i}C_{jk]r)\ell}=0,
\label{tsn1}\end{equation}
\begin{equation}
C_{\ell(pq}{}^{m}C^{\ell}{}_{r[ij}C_{k]st)m}-2C_{\ell(pq[i}C_{j}{}^{\ell}{}_{|r|}{}^{m}C_{k]st)m}=0,
\label{tsn2}\end{equation}
\begin{eqnarray}
3C_{\ell(pq}{}^{m}C^{\ell}{}_{r|n|s}C^{n}{}_{t[ij}C_{k]uv)m}+2C_{\ell(pq}{}^{m}C_{|n|rs[i}C_{j|t|}{}^{n\ell}C_{k]uv)m}+ \nonumber \\[0.3cm]
2C_{\ell(pq}{}^{m}C_{|n|rs[i}C_{j}{}^{n}{}_{|t|}{}^{\ell}C_{k]uv)m}=0,
\label{tsn3}\end{eqnarray}
where $|\ldots|$ denotes exclusion of the enclosed indices from the symmetrization process.
The first TSN condition (\ref{tsn1}) implies
\begin{equation}
C_{\ell(pq}{}^{m}C^{\ell}{}_{r[ij}C_{k]st)m}+2C_{\ell(pq[i}C_{j}{}^{\ell}{}_{|r|}{}^{m}C_{k]st)m}=0.
\label{tsn4}\end{equation}
This equation together with the second TSN condition (\ref{tsn2}) yields the two equations
\begin{equation}
C_{\ell(pq}{}^{m}C^{\ell}{}_{r[ij}C_{k]st)m}=0,
\label{tsn5}\end{equation}
\begin{equation}
C_{\ell(pq[i}C_{j}{}^{\ell}{}_{|r|}{}^{m}C_{k]st)m}=0.
\label{tsn6}\end{equation}
These equations replace the second TSN condition (\ref{tsn2}).
We now show that the third TSN condition is a consequence of the first and second conditions.  The cyclical identity implies
\begin{equation}
C_{\ell(pq}{}^{m}C_{|n|rs[i}C_{j}{}^{\ell}{}_{|t|}{}^{n}C_{k]uv)m}=-C_{\ell(pq}{}^{m}C_{|n|rs[i}C_{j|t|}{}^{n\ell}C_{k]uv)m}+C_{\ell(pq}{}^{m}C_{|n|rs[i}C_{j}{}^{n}{}_{|t|}{}^{\ell}C_{k]uv)m}.
\label{cyclical}\end{equation}
From the above equation and (\ref{tsn6}) we obtain
\begin{equation}
C_{\ell(pq}{}^{m}C_{|n|rs[i}C_{j|t|}{}^{n\ell}C_{k]uv)m}=0,
\label{tsn7}\end{equation}
and
\begin{equation}
C_{\ell(pq}{}^{m}C_{|n|rs[i}C_{j}{}^{\ell}{}_{|t|}{}^{n}C_{k]uv)m}=C_{\ell(pq}{}^{m}C_{|n|rs[i}C_{j}{}^{n}{}_{|t|}{}^{\ell}C_{k]uv)m}.
\label{tsn8}\end{equation}
Now from the first TSN condition and (\ref{tsn8}) we find
\begin{equation}
C_{\ell(pq}{}^{m}C^{\ell}{}_{r|n|s}C^{n}{}_{t[ij}C_{k]uv)m}-2C_{\ell(pq}{}^{m}C_{|n|rs[i}C_{j}{}^{n}{}_{|t|}{}^{\ell}C_{k]uv)m}=0.
\label{tsn9}\end{equation}
On the other hand (\ref{tsn5}) implies
\begin{equation}
C_{\ell(pq}{}^{m}C^{\ell}{}_{r|n|s}C^{n}{}_{t[ij}C_{k]uv)m}+2C_{\ell(pq}{}^{m}C_{|n|rs[i}C_{j}{}^{n}{}_{|t|}{}^{\ell}C_{k]uv)m}=0.
\label{tsn10}\end{equation}
These equations together imply
\begin{equation}
C_{\ell(pq}{}^{m}C^{\ell}{}_{r|n|s}C^{n}{}_{t[ij}C_{k]uv)m}=0,
\label{tsn11}\end{equation}
\begin{equation}
C_{\ell(pq}{}^{m}C_{|n|rs[i}C_{j}{}^{n}{}_{|t|}{}^{\ell}C_{k]uv)m}=0.
\label{tsn12}\end{equation}
Finally (\ref{tsn7}), (\ref{tsn11}), and (\ref{tsn12}) imply that the third TNS condition (\ref{tsn3}) is identically satisfied.
\begin{remark}
This result was first proven in $\Eset^3$ by Czapor \cite{HMS05} using Gr\"{o}bner basis theory and computer algebra.
Sch\"{o}bel \cite{S10} extended the result
to $n$-dimensional spaces of non-zero constant curvature using representation theory of the symmetric group.  However, it seems that our proof,
based on standard indicial tensor algebra, is simpler and more concise.
\end{remark}
Substituting the Killing tensor (\ref{nspKT2}) into the $H$ condition (\ref{Haantjes2}) we obtain the following condition on the coefficients:
\begin{eqnarray}
4C_{\ell(pq}{}^{k}C^{m}{}_{rs|i}C_{j|tu}{}^{n}C^{\ell}{}_{v)mn}+2C_{\ell(p|m|}{}^{k}C^{n}{}_{qr[i}C_{j]st}{}^{m}C^{\ell}{}_{uv)n} \nonumber \\
-5C_{\ell(pq}{}^{k}C^{m}{}_{rs[i}C_{j]|m|t}{}^{n}C^{\ell}{}_{uv)n}+C_{\ell(pq}{}^{k}C^{m}{}_{rs[i}C_{j]}{}^{\ell}{}_{t}{}^{n}C_{|n|uv)m} \nonumber \\
+C_{\ell(pq}{}^{k}C^{m}{}_{rs[i}C_{j]tu}{}^{n}C_{|n|v)m}{}^{\ell}-3C_{\ell(pq}{}^{k}C^{m}{}_{r|ij|}C^{n}{}_{st|m|}C^{\ell}{}_{uv)n}{} \nonumber \\
-2C_{\ell(pq}{}^{k}C^{m}{}_{rs[i}C_{j]t|m|}{}^{n}C^{\ell}{}_{uv)n}=0.
\label{h}\end{eqnarray}
Using this condition we can verify whether or not a given Killing tensor defined on $\Sset^n$ is characteristic. 
It may be shown that that the first and second TSN conditions (\ref{tsn1}) and (\ref{tsn2}) imply that the H condition (\ref{h}) is satisfied.

We next examine the pullback of the quadratic first integral
\begin{equation}
K=K^{ij}p_{i}p_{j}+U.
\label{quadint}\end{equation}
The vanishing of the Poisson bracket implies that $K_{ij}$ satisfies (\ref{CKTeqn}) for $p=2$, and that
\begin{equation}
\frac{\partial U}{\partial x^i}= g_{ij}K^{j\ell}\frac{\partial V}{\partial x^{\ell}},
\label{int2}\end{equation}
where $V$ is the potential function.
The pullback to $\Sset^{n}$ is given by
\begin{equation}
\frac{\partial U}{\partial x^i}x^{i}_{\alpha}=g_{ij}K^{\beta\gamma}x^{j}_{\beta}x^{\ell}_{\gamma}\frac{\partial V}{\partial x^{\ell}}x^{i}_{\alpha},
\end{equation}
which may be written as
\begin{equation}
\frac{\partial U}{\partial u^{\alpha}}=g_{\alpha\beta}K^{\beta\gamma}\frac{\partial V}{\partial u^{\gamma}}.
\end{equation}

We conclude this section by studying the action of $\SE(m)$ on the Killing vector and Killing tensor parameters \cite{JTH07}.  The action  $\SE(m) \circlearrowright \Eset^{m}$ is given by
\begin{equation}
x^i=\Lambda^{i}{}_{j}\tilde{x}^j+\delta^i,
\label{act1}\end{equation}
where $\Lambda^{i}{}_{j} \in \SO(m)$, $\delta^i \in \Eset^{m}$, and $\tilde{x}^i$ denote the transformed Cartesian coordinates.  This tranformation induces by (\ref{KVbasis}) the following transformation of the Killing vectors:
\begin{equation}
\vec{X}_i=\Lambda^{j}{}_{i}\vec{\tilde{X}}_j, \quad \vec{R}_{ij}=\Lambda^{k}{}_{i}\Lambda^{\ell}{}_{j}\vec{\tilde{R}}_{k\ell}+
\mu_{ij}{}^k\vec{\tilde{X}}_k,
\label{act2}\end{equation}
where
\begin{equation}
\mu_{ij}{}^{k}=2\delta^{\ell m}_{ij}g_{mn}\Lambda^{k}{}_{\ell}\delta^m.
\label{act3}\end{equation}
It follows that the Killing vector and parameters transform as
\begin{equation}
\tilde{A}^i=\Lambda^{i}{}_{j}A^j+\mu_{jk}{}^{i}B^{jk}, \quad \tilde{B}^{ij}=\Lambda^{i}{}_{k}\Lambda^{j}{}_{\ell}B^{k\ell},
\label{act4}\end{equation}
while the valence two Killing tensor parameters transform as
\begin{eqnarray}
\tilde{A}^{ij}=\Lambda^{i}{}_{k}\Lambda^{j}{}_{\ell}A^{k\ell}+2\Lambda^{(i}{}_{k}\mu_{\ell m}{}^{j)}B^{klm}+
\mu_{k\ell}{}^{i}\mu_{mn}{}^{j}C^{klmn}, \nonumber \\
\tilde{B}^{ijk}=\Lambda^{i}{}_{\ell}\Lambda^{j}{}_{m}\Lambda^{k}{}_{n}B^{\ell mn}+\mu_{mn}{}^{i}\Lambda^{j}{}_{p}\Lambda^{k}{}_{q}C^{mnpq},
\nonumber \\
\tilde{C}^{ijk\ell}=\Lambda^{i}{}_{m}\Lambda^{j}{}_{n}\Lambda^{k}{}_{p}\Lambda^{\ell}{}_{q}C^{mnpq}.
\label{act5}\end{eqnarray}
In order to make our formulas more compact we introduce a multi-index notation where any upper-case index represents a pair of skew-symmetric lower-case indices.
Thus $\vec{R}_I$ will represent $\vec{R}_{ij}$.  Using this notation we may rewrite (\ref{act2}) as
\begin{equation}
\vec{R}_I=\Lambda^{J}{}_{I}\vec{\tilde{R}}_J+\mu_{I}{}^{k}\tilde{X}_{k},
\label{act6}\end{equation}
where $\Lambda^{I}{}_{K}=\Lambda^{i}{}_{[k}\Lambda^{j}{}_{\ell]}$, represents the second compound of $\Lambda^{i}{}_{j}$.  With the same notation
(\ref{act4}) reads
\begin{equation}
\tilde{A}^i=\Lambda^{i}{}_{j}A^j+\mu_{J}{}^{i}B^J, \quad
\tilde{B}^I=\Lambda^{I}{}_{J}B^J,
\label{act7}\end{equation}
whereas (\ref{act5}) reads
\begin{eqnarray}
\tilde{A}^{ij}=\Lambda^{i}{}_{k}\Lambda^{j}{}_{\ell}A^{k\ell}+2\Lambda^{(i}{}_{k}\mu_{L}{}^{j)}B^{kL}+
\mu_{K}{}^{i}\mu_{L}{}^{j}C^{KL}, \nonumber \\
\tilde{B}^{iJ}=\Lambda^{i}{}_{\ell}\Lambda^{J}{}_{M}B^{\ell M}+\mu_{L}{}^{i}\Lambda^{J}{}_{M}C^{LM},
\nonumber \\
\tilde{C}^{IJ}=\Lambda^{I}{}_{L}\Lambda^{J}{}_{M}C^{LM}.
\label{act8}\end{eqnarray}

We are now in a position to give an invariant classification of the translational Killing vectors on $\Eset^m$ under the action of $\SE(m)$.  The
invariant needed to effect this classification is $\mathcal{I}_1=B_{ij}B^{ij}$.

A {\em translational Killing vector} in $\Eset^m$ is invariantly defined by the condition $\mathcal{I}_1=0$, which implies $B^{ij}=0$.  Thus by (\ref{KV}) a translational KV has the form
\begin{equation}
\vec{K}=A^i\vec{X}_i,
\label{tkv1}\end{equation}
where it is assumed that not all the $A^i$ are zero so that $\vec{K}$ is non-trivial.
It follows from the transformation formula (\ref{act4}) that, by an appropriate choice of $\Lambda^{i}{}_{j}\in \SO(m)$ \cite{HMS05}, we can transform
(\ref{tkv1}) to the form
\begin{equation}
\vec{K}=\tilde{A}^{1}\tilde{X}_{1},
\label{tkv2}\end{equation}
for some $\tilde{A}^1 \neq 0$.

The case when $\mathcal{I}_1 \neq 0$, characterizes the {\em non-translational Killing vectors} the analysis of which will be left to a future paper.

However, we are in a position to classify the rotational Killing vectors in $\Eset^4$ given by (\ref{nspKT2}) that define KVs on $\Sset^3$.  For this
we need the additional invariant $\mathcal{I}_2=\epsilon_{ijkl}B^{ij}B^{kl}$.
We utilize the property \cite{G67} that the skew-symmetric matrix $B^{ij}$, assumed  to be not the $0$ matrix, may be transformed to the following canonical form by an element of $\mathrm{O}(4)$:

\begin{equation}
  B^{ij} = \begin{pmatrix} 0 & \kappa_{1} & 0 & 0 \\ -\kappa_{1} & 0 & 0 & 0 \\
    0 & 0 & 0 & \kappa_{2} \\ 0 & 0 & -\kappa_{2} & 0 \end{pmatrix},
  \label{bij}
\end{equation}
where $\kappa_{1}, \kappa_{2} \in \Rset$.  If $\mathcal{I}_{1} \neq 0$ and $\mathcal{I}_{2}=0$, then $\kappa_{1}^2+\kappa_{2}^2 \neq 0$ and $\kappa_{1}\kappa_{2}=0$.  Without loss of generality we may set $\kappa_2=0$, which implies that the KV has the form
\begin{equation}
K=b_{12}R_{12},
\label{rot1}\end{equation}
where $b_{12}=2\kappa_{1}$ is some constant.  We now assume $\mathcal{I}_2 \neq 0$, which implies that $\kappa_1\kappa_2 \neq 0$.  In this case
the matrix $B^{ij}$ is rank four from which it follows that the KV has the form
\begin{equation}
K=b_{12}R_{12}+b_{34}R_{34},
\label{rot2}\end{equation}
where $b_{12}=2\kappa_1$ and $b_12=2\kappa_2$ are some constants.  The equations (\ref{rot1}) and (\ref{rot2}) via Proposition \ref{delongprop} give the two possible
canonical forms for the rotations on $\Sset^3$.


\section{The equivalence problem for Killing tensors on $\mathbb{S}^3$}
\label{section4}
To solve the equivalence problem outlined in the preceding section for the CKTs on $\Sset^3$ we employ as in the Cartesian coordinates of the ambient Euclidean space $\Eset^4 \supset \Sset^3$. In terms of these coordinates, the general Killing tensor of $\mathbb{S}^3$ is given by
\begin{equation}
\label{gK}
\vec{K} = 4C^{ijk\ell}\vec{R}_{ij}\odot \vec{R}_{k\ell},
\end{equation}
where
\begin{equation}
\label{genR}
\vec{R}_{ij} = \delta^{m}{}_{ijk}x^k\vec{X}_m, \quad \delta^{m}{}_{ijk} = \delta^m_ig_{jk} - \delta^m_jg_{ik},
\end{equation}
$$i,j,k,\ell, m = 1,\ldots, 4,$$
$x^1,\ldots, x^4$ are  Cartesian coordinates of $\Eset^4$, $\vec{X}_i = \partial_i$,  $g_{ij}$ are the components of the metric and $\odot$ is the symmetric tensor product. Note that (\ref{genR}) defines the six generators of $so(4)$, the Lie algebra of the  isometry group $SO(4)$ of $\Sset^3$ (no reflections). Therefore the RHS of (\ref{gK}) represents a vector space of Killing tensors on $\Sset^3$. Furthermore, the dimension $d$ of this space is determined by the tensor $C^{ijk\ell}$ which has the same symmetries as that of the Riemann curvature tensor, hence $d =20$.

The special orthogonal group $\SO(4)$ is a Lie subgroup of the orthogonal group $\mathrm{O}(4)$, consisting of all orthogonal matrices $\Lambda$ with positive unit determinant.  The transitive action of $\SO(4)$ on $\mathbb{E}^4$ can be specified by
\[x^i = \Lambda^i_{\ j}\tilde{x}^j, \]
where $\Lambda^i_{\ j} \in \SO(4)$ and $x^i$ denote Cartesian coordinates. This, in turn, induces the following transformation
\begin{equation} \label{act9} \vec{R}_{ij} = \Lambda^k_{\ i }\Lambda^{\ell}_{\ j}\vec{\tilde{R}}_{k\ell}, \ \ \ K^{ij} = \Lambda^i_{\ k}\Lambda^j_{\ \ell}\tilde{K}^{k \ell}\end{equation}
on the Killing vectors (\ref{genR}) and Killing tensors (\ref{gK}) of $\mathbb{E}^4$.  At the same time, this action induces the following transformations
\begin{equation} \label{act10} \tilde{B}^{ij} = \Lambda^i_{\ k}\Lambda^j_{\ \ell} B^{k \ell}, \ \ \ \ \tilde{C}^{ijk\ell} = \Lambda^i_{\ p} \Lambda^j_{\ q} \Lambda^k_{\ r} \Lambda^{\ell}_{\ s} C^{pqrs} \end{equation}
on the Killing vector and Killing tensor parameters.

To make our formulas more compact, we will once again adopt a multi-index notation (i.e., $\vec{R}_I$ will represent $\vec{R}_{ij}$) .  Using this notation we may rewrite the first equation of (\ref{act9}) as
\begin{equation}
\vec{R}_I=\lambda^{J}{}_{I}\vec{\tilde{R}}_J,
\label{act11}\end{equation}
where $\lambda^{I}{}_{K}=\lambda^{i}{}_{[k}\lambda^{j}{}_{\ell]}$, represents the second compound of $\lambda^{i}{}_{j}$.  With the same notation
(\ref{act10}) reads
\begin{equation}
\tilde{B}^I=\lambda^{I}{}_{J}B^J, \quad
\tilde{C}^{IJ}=\lambda^{I}{}_{K}\lambda^{J}{}_{L}C^{KL}.
\label{act12}\end{equation}

To obtain the invariants and covariants of the group action $\SO(4) \circlearrowright \mathcal{K}^2(\mathbb{S}^3)$, we apply the theory developed in \cite{JTH07}.  In particular, by taking contractions of products of the general Killing tensor, the Euclidean metric, and the coefficient tensor, we can obtain a complete set of $\SO(4)$-covariants and invariants.  Using matrices $\mathbi{K}$ and $\mathbi{C}$, where $(\mathbi{K})^i_{\ j} = K^i_{\ j}$, and $(\mathbi{C})^I_{\ J} = C^{ij}_{\ \ k\ell}$, as well as the trace operator ``Tr", we can express these covariants $\mathcal{C}_i$ and invariants $\mathcal{I}_i$ as follows
\[\mathcal{C}_1 = \mbox{Tr}(\mathbi{K}), \ \ \mathcal{C}_2 = \mbox{Tr}(\mathbi{K}^2), \ \ \mathcal{C}_3 = \mbox{Tr}(\mathbi{K}^3), \ \ \mathcal{C}_4 = g_{ij}x^ix^j,\]
\[\mathcal{I}_1 = \mbox{Tr}(\mathbi{C}), \ \ \mathcal{I}_2 = \mbox{Tr}(\mathbi{C}^2), \ \ \mathcal{I}_3 = \mbox{Tr}(\mathbi{C}^3), \ \ldots \ , \mathcal{I}_{14} = \mbox{Tr}(\mathbi{C}^{14}),\]
where, for example, $\mbox{Tr}(\mathbi{K}) = K^i_{\ i} $ and $\mbox{Tr}(\mathbi{C}) = C^{I}_{\ \ I}$.

As we stated in Section \ref{section1}, the action $\SO(4) \circlearrowright \mathcal{K}^2(\mathbb{S}^3)$ foliates the vector space $\mathcal{K}^2(\mathbb{S}^3)$ into the orbit space $\mathcal{K}^2(\mathbb{S}^3)/\SO(4)$.  Each orbit is represented by a canonical form, and the solution to the equivalence problem requires determining such canonical forms as well as a classification scheme for finding which orbit a given CKT belongs to.  As we mentioned in the introduction, the canonical forms problem has been solved.  Please refer to the appendix for a list of six canonical forms for the orbit space $\mathcal{K}^2(\mathbb{S}^3)/\mathrm{SO}(4)$.

Let us now develop a classification scheme for the orbit space $\mathcal{K}^2(\mathbb{S}^3)/\SO(4)$.  The set of invariants and covariants listed above could be used to try and classify the orbits of these CKTs.  This approach was successfully implemented in the solution to the equivalence problem of Killing tensors defined on $\mathbb{E}^3$ \cite{JTH07}, although the calculations were quite cumbersome.  Indeed, the difficulty in this approach lies in finding certain linear combinations of the above invariants and covariants which distinguish between the orbits.  A different approach which has proven to be more efficient and successful for the problems with small numbers of orbits is the \emph{method of web symmetries}. The central idea of this method is to use the symmetry properties of the associated orthogonal separable web of a canonical CKT to characterize its orbit.  In Section 2 we stated that the web symmetries of a Killing tensor $\mathbi{K}$ are generated by Killing vectors on the manifold.  Thus, to determine all of the symmetry generators of a given web we impose the following condition
\begin{equation}\label{symmetry} \mathcal{L}_\mathbi{V}\mathbi{K }= 0\end{equation}
on the Killing tensor $\mathbi{K}$ defining the web, using the general Killing vector $\mathbi{V}$ of our manifold.

In what follows we demonstrate the surprising result that the six CKTs of $\mathbb{S}^3$ can be classified based purely on the symmetry properties of their associated webs. To this end, we are interested in obtaining the symmetry properties of a web \emph{before} we impose the spherical constraint, which will yield additional web symmetries for the CKTs.  Visually, this corresponds to capturing all of the symmetry properties of a web before it is intersected with the surface of $\mathbb{S}^3$.

To achieve this, we impose condition (\ref{symmetry}) on each of the six CKTs using the general Killing vector of the ambient space $\mathbb{E}^4$.  This  will enable us to determine if a web is rotationally and/or translationally symmetry before it intersects the surface of $\mathbb{S}^3$.  Upon applying this method, we find that four of the six webs admit at least one rotational web symmetry. We can go even further by noting the number of rotational symmetries a CKT admits, which effectively divides the six canonical forms into three categories. Lastly, we find that two of the six webs admit translational symmetry, which provides the final distinguishing feature between each of the six webs. Please refer to Table \ref{S3classification} for a summary of these results.

\begin{remark}
In an application problem, it is possible that a given CKT $\mathbi{K}$ may have the Casimir tensor present. Specifically,
\[\mathbi{K} = \alpha \vec{\mathcal{C}} + \mathbi{K}_1, \]
where $\vec{\mathcal{C}}$ is the Casimir tensor (\ref{casimir1}) and $\mathbi{K}_1$ is a CKT.  If $\mathbi{K}_1$ is translationally symmetric, then the addition of the rotationally symmetric Casimir tensor destroys this translational symmetry.  Thus in order to determine all of the symmetries of $\mathbi{K}$ with or without the presence of the Casimir tensor, it is necessary to check the more general condition
\[\mathcal{L}_{\mathbi{V}}(\mathbi{K} + \alpha \vec{\mathcal{C}}) = 0,\]
for arbitrary $\alpha$.
\end{remark}

\begin{center}
\begin{table}\label{S3classification} \caption{Symmetry classification of the six webs of $\mathbb{S}^3$ under the action of $\SO(4)$} \begin{center}
\begin{tabular}{cccc}\hline \hline Category & Symmetry & Separable webs & Generators \\ \hline \hline I. &  2 rotations & cylindrical  & $\textbf{R}_{12}, \textbf{R}_{34}$ \\  II. & 1 translation \& 1 rotation & spherical & $\textbf{R}_{12}, \textbf{X}_{4}$ \\ III. & 1 translation & spheroelliptic & $\textbf{X}_4$ \\  IV. & 1 rotation & elliptic-cylindrical I  & $\textbf{R}_{12}$\\  & & elliptic-cylindrical II & \\  V.  & none & ellipsoidal  & \\ \hline \end{tabular} \end{center}
\end{table}
\end{center}

It is necessary to prove that the aforementioned symmetry properties of a Killing tensor are invariant under the action of $\SO(4)$.  To do so, it suffices to solve the equivalence problem of Killing vectors of $\mathcal{K}^1(\mathbb{E}^4)$ under the action of the group $\SO(4)$.  To begin, we note that the general Killing vector of $\mathbb{E}^4$ is given by
\[\mathbi{K} = A^i\mathbi{X}_i + B^{I}\mathbi{R}_I,\]
where $A^i$ and $B^I$ denote the Killing vector parameters, and $\mathbi{X}_i$ and $\mathbi{R}_I$ are the Killing vector fields defined previously.  The action of $\SO(4)$ on $\mathbi{K}$ induces the following transformations
 \[\tilde{A}^i = \Lambda^i_{\ j}A^j, \ \ \tilde{B}^I = \Lambda^I_{\ K} B^K,\]
 on the Killing vector parameters.  Therefore, it follows that
\[\mathcal{I}_1 = B^IB_I,\ \  \ \ \mathcal{I}_2 = A^iA_i\]
are invariants in the orbit space $\mathcal{K}^1(\mathbb{S}^3)/\SO(4)$. Using either of these two invariants it is possible to distinguish between two different types of symmetry generators.  Please refer to Table \ref{KVclassification} for a summary of these results. We can conclude that the translational and rotational web symmetries as defined by the Killing vectors of $\mathcal{K}^1(\mathbb{E}^4)$ are inequivalent under the action of $\SO(4)$.

\begin{center}
\begin{table} \label{KVclassification} \caption{Invariant classification of Killing vectors on $\mathbb{E}^4$ under the action of $\SO(4)$}\begin{center}
\begin{tabular}{ccc}  \hline \hline Category & Canonical form & $\mathcal{I}_1$  \\ \hline \hline  I & $\mathbi{R}_{12} $& $\neq 0$  \\ II & $\mathbi{X}_1$ & 0   \\ \hline
\end{tabular}
\end{center}
\end{table}
\end{center}

In addition to a classification scheme, a solution to the equivalence problem also requires a method for determining the \emph{moving frames map} which identifies the group action required to return a given CKT to the canonical form of its orbit.  On the two-dimensional manifolds $\mathbb{E}^2, \mathbb{M}^2$ and $\mathbb{S}^2$, algebraic formulas have been derived \cite{MST02, MST04, H08} for determining the moving frame map of a given CKT.  On $\mathbb{E}^3$ and $\mathbb{M}^3$, a combination of web symmetry and eigenvalues and eigenvectors of the parameter matrices has be used to determine such a map \cite{HMS05,HMS09}.  In our case, however, the situation is complicated by the fact that our coefficient tensor, $C^{ijk\ell}$, has order six when regarded as a matrix.  As such, we will need to devise a different strategy for determining the moving frame map of a CKT on $\mathbb{S}^3$.

It has been noted in Section 3  from (\ref{symrels}) that the coefficient tensor $C^{ijk\ell}$ has the same symmetries as the curvature tensor, and thus can be called an \emph{algebraic curvature tensor}.  In light of this property, let us lower the last three indices of $C^{ijk\ell}$ and contract on the first and third indices
\[Ric = C^{i}_{\ ji\ell} = R_{j\ell}\]
to obtain an \emph{algebraic Ricci tensor}.  The coefficient tensor for each of the six canonical forms listed in the Appendix can be contracted to define a \emph{canonical Ricci tensor} in each case.  The following proposition demonstrates that the Ricci tensor can be used to define the moving frame map for a given CKT.

\begin{proposition}\label{RicciProp}
A Killing tensor (\ref{gK}) is in canonical form if and only if its Ricci tensor is in canonical form.
\end{proposition}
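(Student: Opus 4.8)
The plan is to recast the biconditional as a statement about stabilizer subgroups of $\SO(4)$, exploiting the fact that the Ricci contraction $\mathrm{Ric}\colon C\mapsto R_{j\ell}=C^i{}_{ji\ell}$ is $\SO(4)$-equivariant. Indeed, by (\ref{act10}) the coefficient tensor transforms as $\tilde C^{ijk\ell}=\Lambda^i{}_p\Lambda^j{}_q\Lambda^k{}_r\Lambda^\ell{}_s\,C^{pqrs}$, and contracting the first and third slots shows that $R$ transforms as an ordinary symmetric two-tensor, $\tilde R_{j\ell}=\Lambda^p{}_j\Lambda^q{}_\ell\,R_{pq}$. Every CKT in the orbit of a canonical form $C_{(a)}$ can be written $C=\Lambda\cdot C_{(a)}$ for some $\Lambda\in\SO(4)$, whence $R=\Lambda\cdot R_{(a)}$ with $R_{(a)}:=\mathrm{Ric}(C_{(a)})$ the canonical Ricci tensor. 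Writing $T_{(a)}=\mathrm{Stab}_{\SO(4)}(C_{(a)})$ and $S_{(a)}=\mathrm{Stab}_{\SO(4)}(R_{(a)})$, the tensor $C$ is in canonical form iff $\Lambda\in T_{(a)}$, and (provided the six canonical Ricci tensors have pairwise distinct spectra, so that the orbit label cannot change under contraction) $R$ is in canonical form iff $\Lambda\in S_{(a)}$. The proposition is therefore equivalent to the identity $T_{(a)}=S_{(a)}$ in each of the six cases.

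The forward implication is immediate and needs no case analysis: if $\Lambda\in T_{(a)}$ then $\Lambda\cdot R_{(a)}=\mathrm{Ric}(\Lambda\cdot C_{(a)})=\mathrm{Ric}(C_{(a)})=R_{(a)}$ by equivariance, so $T_{(a)}\subseteq S_{(a)}$. To make this direction concrete I would first contract each of the six canonical coefficient tensors from the Appendix, record the resulting diagonal canonical Ricci tensors $R_{(a)}$ in a table, read off their eigenvalue multiplicities, and confirm that the six spectra are mutually distinct (this last point discharges the orbit-label hypothesis used above). The substantive direction is the reverse inclusion $S_{(a)}\subseteq T_{(a)}$: the stabilizer of the canonical Ricci tensor must already fix the canonical Killing tensor. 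I would verify this case by case, computing $S_{(a)}$ as the block-orthogonal subgroup of $\SO(4)$ determined by the eigenspace decomposition of $R_{(a)}$ and checking that each of its generators leaves $C_{(a)}$ invariant, using the explicit canonical forms together with the web-symmetry generators exhibited in Table~\ref{S3classification}.

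The main obstacle is this reverse inclusion in the degenerate cases, i.e.\ those canonical forms (such as the cylindrical and spherical webs) whose Ricci tensor has repeated eigenvalues and hence a positive-dimensional stabilizer $S_{(a)}$. There the danger is that $R_{(a)}$ is strictly more symmetric than $C_{(a)}$, which would give $S_{(a)}\not\subseteq T_{(a)}$ and break the proposition; the content of the proof is precisely that this does not occur. I expect to control it by invoking the structure forced on $C$ by the characteristic property, namely the reduced TSN conditions (\ref{tsn1})--(\ref{tsn3}): these algebraic relations constrain the trace-free (``Weyl'') part of $C_{(a)}$ that the contraction discards (recall that in four dimensions $R$ recovers only ten of the twenty components of $C$), and the claim is that once $R$ is pinned to $R_{(a)}$ this remainder is in turn forced to that of $C_{(a)}$. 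Concretely, for each degenerate case I would show that the continuous part of $S_{(a)}$ coincides with the group generated by the rotational web symmetries listed in Table~\ref{S3classification}, so that $S_{(a)}\subseteq T_{(a)}$ follows from the invariance of $C_{(a)}$ under its own web symmetries, while the residual finite (sign-change) part of $S_{(a)}$ is checked directly against the explicit entries of $C_{(a)}$.
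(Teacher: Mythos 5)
Your proposal is correct in substance but follows a genuinely different --- and considerably more careful --- route than the paper. The paper's own proof is a short contradiction argument: it declares the forward direction trivial; for the converse it supposes the Ricci tensor of $\mathbi{K}$ canonical but $\mathbi{K}$ not, picks $\Lambda \in \SO(4)$ carrying $\mathbi{K}$ to its canonical form, uses the same equivariance of the contraction that you invoke (via the transformation laws (\ref{act9})--(\ref{act10})) to conclude that the transformed Ricci tensor $\tilde{\mathcal{R}}$ is canonical and hence $\tilde{\mathcal{R}} = \mathcal{R}$, and stops there, calling this a contradiction. In your stabilizer language, the paper implicitly assumes $S_{(a)} \subseteq T_{(a)}$: the possibility that a nontrivial $\Lambda$ stabilizes the canonical Ricci tensor without stabilizing the canonical coefficient tensor --- exactly the degenerate-eigenvalue danger you isolate for the spherical, cylindrical and elliptic-cylindrical webs, where $S_{(a)}$ is positive-dimensional --- is never addressed. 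Your case-by-case verification that the continuous part of $\mathrm{Stab}_{\SO(4)}(R_{(a)})$ is generated by the rotational web symmetries of Table \ref{S3classification}, together with the finite sign-change check, is precisely the content missing from the paper's two-line argument; what the paper's version buys is brevity, what yours buys is an actual proof of the hard inclusion. Two caveats on your side: first, your standing hypothesis that the six canonical Ricci tensors have pairwise distinct spectra fails literally, since the elliptic-cylindrical webs of types 1 and 2 share the identical canonical Ricci tensor $\mathrm{diag}(c_1+c_2+c_3,\ c_1+c_2+c_3,\ 2c_2+c_4,\ 2c_3+c_4)$; the orbit label must therefore be fixed beforehand by the symmetry classification, which is how the paper's algorithm in fact proceeds (the Ricci diagonalization is invoked only after Table \ref{S3classification} has identified the web). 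Second, the appeal to the reduced TSN conditions (\ref{tsn1})--(\ref{tsn2}) to control the trace-free part of $C_{(a)}$ is unnecessary: the canonical coefficient tensors are explicit, and invariance under the relevant block rotations can be read off directly, since each degenerate $C_{(a)}$ is assembled from manifestly $\SO(2)$-invariant combinations such as $\vec{R}_{13}\odot\vec{R}_{13} + \vec{R}_{23}\odot\vec{R}_{23}$.
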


\begin{proof}
Since the canonical form of the Ricci tensor is defined by the canonical form of $\mathbi{K}$, the first direction is  trivial. For the other direction, we prove by contradiction.  Suppose the Ricci tensor of a Killing tensor $\mathbi{K}$ is in canonical form, but $\mathbi{K}$ is not. Since $SO(4)$ acts transitively on the orbits of $\mathcal{K}^2(\mathbb{S}^3)/SO(4)$, we can find a group action $\Lambda \in SO(4)$ which sends $\mathbi{K}$ to its canonical form $\tilde{\mathbi{K}}$.  In particular, the components of $\mathbi{K}$ transform according to (\ref{act9})
which induces the following transformation
\[\tilde{C}^{i}_{\ jk\ell} = \Lambda^i_{\ m} \Lambda^n_{\ j} \Lambda^p_{\ k} \Lambda^{q}_{\ \ell} C^{m}_{ \ \ npq}\]
on the coefficient tensor $C^i_{\ jk\ell}$. At the same time, this action on $C$ induces the following transformation
\[ \tilde{R}_{j\ell} = \Lambda^m_{\ j}\Lambda^n_{\ \ell}R_{mn}\]
on its Ricci tensor $\mathcal{R}$. Since the Ricci tensor of a canonical Killing tensor is necessarily canonical, we must have $\tilde{\mathcal{R}} = \mathcal{R}$.  This is a contradiction.
\end{proof}

According to Proposition \ref{RicciProp}, the moving frame map of a CKT can be constructed by determining the moving frame map of the corresponding Ricci tensor.  Note that each canonical Ricci tensor can be represented by a diagonal matrix of order four.  Therefore, the determination of the moving frame map for the Ricci tensor is an eigenvalue-eigenvector problem for matrices of order four.  Before we illustrate this technique with the application in the next section, we summarize our results in the following algorithm.
\begin{enumerate}
\item Begin by substituting the potential into the compatibility condition (\ref{compatibility}) to determine the most general Killing tensor compatible with the potential.  Using this Killing tensor, determine the subspace of CKTs.
 \item Next, we classify a CKT $\vec{K}$ by determining whether it admits any symmetry. Namely, impose the constraint
 \[\mathcal{L}_{\mathbi{V}} (\mathbi{K}+ \alpha \mathbi{C}) = 0,\]
 where $\alpha$ is an arbitrary parameter, $\vec{\mathcal{C}}$ is the Casimir tensor, and $\vec{V}$ is the general Killing vector of $\mathbb{E}^4$,
 \[\mathbi{K} = A^i\mathbi{X}_i + B^{I}\mathbi{R}_{I}.\]
 If $\vec{K}$ does admit symmetry, determine which type and the number of generators for each type. Consult Table \ref{S3classification} to classify the CKT.
 \item To determine the moving frame map for $\mathbi{K}$, find the Ricci tensor $\mathcal{R}$ of the coefficient tensor.
 Diagonalize $\mathcal{R}$ by solving the corresponding eigenvalue-eigenvector problem.  The matrix $\Lambda$, which diagonalizes $\mathcal{R}$ defines the moving frame map.
 \item Finally, define the orthogonally separable set of coordinates corresponding to $\mathbi{K}$ by substituting $\Lambda$ found in the previous step into the equation
 \[x^i = \Lambda^i_{\ j} T^j(u^k),\]
 where $x^i = T^j(u^k)$ denote the canonical orthogonally separable coordinates corresponding to $\mathbi{K}$.
 \end{enumerate}
\section{Application}

Consider the following natural Hamiltonian
\[H = \mathcal{C}^{ij}p_ip_j + \frac{1}{(x-y)^2},\]
defined on $\mathbb{S}^3$, where $\mathcal{C}^{ij}$ denotes the Casimir tensor and $x,y,z,w$ are cartesian coordinates of the ambient space $\mathbb{E}^4$.  Using this Hamiltonian, we will now demonstrate how to apply the theory outlined in this paper.

First, we impose the compatibility condition (\ref{compatibility}) to obtain a family of Killing tensors which are compatible with the potential.  Of this family, the following restrictions on the parameters yields a subfamily of Killing tensors which satisfies the Haantjes condition (\ref{Haantjes1}) and generally admits 3 distinct eigenvalues:
\[\begin{array}{l} C_{1212} = C_{3434}, \ C_{1313} = C_{2323}, \ C_{1414} = C_{2424},\\  C_{1323} = C_{1313} - C_{1212}, \ C_{1424} = C_{1212} - C_{1414} \end{array}\]
Therefore we conclude that $K$ must characterize at least one of the six orthogonal separable webs of $\mathbb{S}^3$.  After a direct calculation, we  find that $K$ admits the following family of rotational Killing vectors
\[\mathbf{V} = (c_3z  + c_6w)\frac{\partial}{\partial x} + (c_6w - c_3z)\frac{\partial }{\partial y} + (c_3y - c_3x)\frac{\partial}{\partial z} - (c_6y+c_6x)\frac{\partial}{\partial w},\]
for arbitrary constants $c_3$ and $c_6$.  Using the classification scheme outlined in Table \ref{S3classification} we conclude that $K$ characterizes a non-canonical cylindrical web.

In order to determine the orthogonally separable coordinates for this Killing tensor, we need to determine the transformation which maps $K$ to its canonical form.  As discussed in Section \ref{section4}, such a map can be constructed by diagonalizing the Ricci tensor of the coefficient tensor.  Contracting indices, we obtain the following non-canonical Ricci tensor for this family of characteristic Killing tensors
\[\mathcal{R}_{jk} = \left(\begin{array}{cccc} C_{1212} + C_{1313} + C_{1414} & C_{1414} - C_{1313} & 0 & 0 \\  C_{1414} - C_{1313}  & C_{1212} + C_{1313} + C_{1414} & 0 & 0 \\ 0 & 0 & 2C_{1313} + C_{1212} & 0 \\ 0 & 0 & 0 & 2C_{1414} + C_{1212}  \end{array} \right).\]
After calculating the eigenvalues and corresponding eigenvectors of $\mathcal{R}_{jk}$ and applying the Gram-Schmidt orthonormalization procedure, we obtain an orthogonal matrix
\[\Lambda^i_{\ j} = \left(\begin{array}{cccc} \frac{-\sqrt{2}}{2} & 0 & 0 & \frac{\sqrt{2}}{2}\\ \frac{\sqrt{2}}{2} & 0 & 0&  \frac{\sqrt{2}}{2} \\ 0 & -1 & 0 & 0\\ 0 & 0 & 1 & 0 \end{array}\right)\]
which brings the Ricci tensor into canonical form
\[\tilde{\mathcal{R}}_{jk} = \mbox{diag}(2C_{1313} + C_{1212}, \ 2C_{1313} + C_{1212}, \ 2C_{1414} + C_{1212}, \ 2C_{1414} + C_{1212}).\]
Therefore, we conclude that
\[\begin{array}{lll} x &=& -\frac{\sqrt{2}}{2}\cos t\cos u + \frac{\sqrt{2}}{2}\sin t \sin v  \\ y &=& \frac{\sqrt{2}}{2}\cos t\cos u + \frac{\sqrt{2}}{2}\sin t \sin v \\ z &=& -\cos t \sin u \\ w &=& \sin t \cos v.\end{array} \]
is a system of orthogonally separable coordinates for this Hamiltonian.

\section{Conclusion}

The results presented in this paper conclude an important project  within the framework  of a more  general program of  the development of  the Hamilton-Jacobi theory of orthogonal separation of variables for natural Hamiltonians defined in spaces of constant curvature (see Table \ref{table:conclusions}). Having solved the equivalence problem, thus extending the classical result by Olevsky,  we have developed a general algorithm for solving the natural Hamiltonias defined in three-dimensional sphere  via orthogonal separation of variables.  In addition we give a simple and
concise proof of the fact that the validity of the first and second TSN conditions imply the validity of the third.
We have  also derived a set of  analogous algebraic conditions following from  the vanishing Haanjes tensor which can be used to study and characterize algebraic and geometric properties of  Killing two-tensors defined in spaces of constant, non-flat curvature. These conditions provide an alternative characterization of CKTs to the one derived in \cite{S10} and this paper based on the TSN criterion.

The results presented here lay the groundwork for a project that concerns orthogonal separation of variables afforded by characteristic Killing two-tensors defined in three-dimensional hyperbolic space which is the subject of a forthcoming paper.
\begin{table}[t]\begin{center}\renewcommand{\arraystretch}{1.2}
  \caption{\label{table:conclusions}Solutions to the canonical forms and
    equivalence problems.}
  \begin{tabular}{lll} \hline\hline
    \rule[-1.5mm]{0mm}{5mm} & Canonical forms problem & Equivalence problem \\ \hline
    Euclidean space $\Eset^{3}$ & $\begin{array}{l} \text{Eisenhart, 1934} \\
    \text{Boyer {\em et al\/}, 1976}  \end{array}$ & $\begin{array}{l} \text{Horwood {\em et al\/}, 2005}
    \\ \text{Horwood, 2007} \end{array}$ \\ \hline
    Minkowski space
    $\Mset^{3}$   & $\begin{array}{l} \text{Horwood and}
    \\ \text{McLenaghan, 2007} \end{array}$ &
    \text{Horwood {\em et al}, 2009} \\ \hline
    Sphere
    $\Sset^{3}$   &  \text{Olevsky, 1950}
   &
    \text{Cochran {\em et al}, 2010} \\

 \hline\hline
  \end{tabular}
\renewcommand{\arraystretch}{1.0}\end{center}\end{table}

\vspace{1in}

\noindent \Large \textbf{Acknowledgements} \\ 
\normalsize

\noindent  The authors acknowledge financial support from National Sciences Engineering Council of Canada (NSERC) in the form of Discovery Grants (RGM, RGS) and Postgraduate Scholarship (CMC) as well as the Izaak Walton Killam Memorial Scholarship (CMC).

\normalsize 


\section*{Appendix}
The following is a list of canonical forms for the six orthogonally separable coordinate systems of $\mathbb{S}^3$.  Webs II, III, IV and VI are defined by $\mathbi{K} = \alpha \mathbi{K}_1 + \beta \mathbi{K}_2 + \gamma \vec{\mathcal{C}}$, where $\mathbi{K}_1, \mathbi{K}_2$ come from Eisenhart's equations, $\vec{\mathcal{C}}$ is the Casimir tensor, and $\alpha, \beta, \gamma \in \mathbb{R}$.  Since it is possible to determine the presence of the Casimir tensor for webs I and V, and thus subtract it, these webs are defined by $\mathbi{K} = \alpha \mathbi{K}_1 + \beta \mathbi{K}_2$, where $\mathbi{K}_1, \mathbi{K}_2$ come from Eisenhart's equations and $\alpha, \beta \in \mathbb{R}$.
\subsection*{Rotational webs}
\begin{enumerate}[I.]
\item Spherical web
\subitem  $\mathbi{K} = c_1\mathbi{R}_{12}\odot\mathbi{R}_{12} + c_2(\mathbi{R}_{13}\odot \mathbi{R}_{13} + \mathbi{R}_{23}\odot \mathbi{R}_{23}) $
\subitem $R_{ij} = \mbox{diag}(c_1 + c_2, \ c_1 + c_2, \ 2c_2,0)$

\item Cylindrical web
\subitem $\mathbi{K} = c_1\mathbi{R}_{12}\odot \mathbi{R}_{12} + c_2(\mathbi{R}_{13}\odot\mathbi{R}_{13} + \mathbi{R}_{14}\odot\mathbi{R}_{14} + \mathbi{R}_{23}\odot\mathbi{R}_{23} + \mathbi{R}_{24}\odot\mathbi{R}_{24}) + $ \subitem $ \ \ \ \ \ \ \  c_3\mathbi{R}_{34}\odot \mathbi{R}_{34}$
\subitem $R_{ij} = \mbox{diag}(c_1 + 2c_2, \ c_1 + 2c_2, \ 2c_2 + c_3, \ 2c_2 + c_3)$
\item Elliptic-cylindrical web of type 1
\subitem $\mathbi{K} = c_1\mathbi{R}_{12}\odot \mathbi{R}_{12} + c_2(\mathbi{R}_{13}\odot \mathbi{R}_{13} + \mathbi{R}_{23}\odot \mathbi{R}_{23}) + c_3(\mathbi{R}_{14}\odot \mathbi{R}_{14} + \mathbi{R}_{24}\odot \mathbi{R}_{24})+$  \subitem $ \ \ \ \  \ \ \ c_4\mathbi{R}_{34}\odot \mathbi{R}_{34} $
\subitem Essential parameter: $k^2 = \displaystyle \frac{c_4-c_2}{c_4 - c_3}$
\subitem $R_{ij} = \mbox{diag}(c_1 + c_2 + c_3, \ c_1 + c_2 + c_3, \ 2c_2 + c_4, \ 2c_3 + c_4)$
\item Elliptic-cylindrical web of type 2
\subitem $\mathbi{K} = c_1\mathbi{R}_{12}\odot \mathbi{R}_{12} + c_2(\mathbi{R}_{13}\odot \mathbi{R}_{13} + \mathbi{R}_{23}\odot \mathbi{R}_{23}) + c_3(\mathbi{R}_{14}\odot \mathbi{R}_{14} + \mathbi{R}_{24}\odot \mathbi{R}_{24})+$  \subitem $ \ \ \ \  \ \ \ c_4\mathbi{R}_{34}\odot \mathbi{R}_{34} $
\subitem Essential parameter:  $k^2 = \displaystyle \frac{c_4 - c_3}{c_2 - c_3}$
\subitem $R_{ij} = \mbox{diag}(c_1 + c_2 + c_3, \ c_1 + c_2 + c_3, \ 2c_2 + c_4, \ 2c_3 + c_4)$
\subsection*{Translational web}
\item Spheroelliptic web
\subitem $\mathbi{K} = c_1\mathbi{R}_{12}\odot \mathbi{R}_{12} + c_2\mathbi{R}_{13}\odot \mathbi{R}_{13} + c_3\mathbi{R}_{23}\odot \mathbi{R}_{23}   $
\subitem Essential parameter:  $k'^2 = \displaystyle \frac{c_2 - c_3}{c_1 - c_3}$
\subitem $R_{ij} = \mbox{diag}(c_1 + c_2, \ c_1 + c_3, \ c_2 + c_3, 0)$
\subsection*{Asymmetric web}
\item Ellipsoidal web
\subitem $\mathbi{K} = c_1\mathbi{R}_{12}\odot \mathbi{R}_{12} + c_2\mathbi{R}_{13}\odot \mathbi{R}_{13} + c_3\mathbi{R}_{14}\odot \mathbi{R}_{14} +c_4\mathbi{R}_{23}\odot \mathbi{R}_{23}+ c_5\mathbi{R}_{24}\odot \mathbi{R}_{24} +$ \subitem \ \ \ \ \ \ \ $  c_6\mathbi{R}_{34}\odot \mathbi{R}_{34} $
\subitem where the $c_i$ satisfy the constraint
\subitem $\displaystyle (c_3 + c_4)(c_1c_6 - c_2c_5) + (c_2+ c_5)(c_3c_4 -c_1c_6) + (c_1 + c_6)(c_2c_5 - c_3c_4) = 0$
\subitem \subitem Essential parameters:
\subitem $a = \displaystyle \frac{c_1(c_2 - c_4) + c_6(c_2 - c_3) - c_2(c_3 + c_4) + 2c_3c_4}{c_1(c_2-c_4) + c_4(c_6-c_2) + c_5(c_4 - c_6)}, $
\subitem $b = \displaystyle \frac{c_2(c_1 - c_4) + c_1(c_5 - c_4) -c_3(c_1 + c_5) + 2c_3c_4}{c_1(c_2-c_4) + c_4(c_6-c_2) + c_5(c_4 - c_6)}$
\subitem $R_{ij} = \mbox{diag}(c_1 + c_2 + c_3, \ c_1 + c_4 + c_5, \ c_2 + c_4 + c_6, \ c_3 + c_5 + c_6)$
\end{enumerate}


\begin{thebibliography}{99}

\bibitem{AMS07} Adlam, C.~M., McLenaghan, R.~G., and Smirnov, R.~G., ``On
  geometric properties of joint invariants of Killing tensors'' (In) Eastwood,
  M. and Miller, W., Jr.\ (eds.) {\em Symmetries and Overdetermined Systems of
  Partial Differential Equations}, The IMA Volumes in Mathematics and its
  Applications {\bf 144}, 205--222 (Springer-Verlag, New York, 2008).

  \bibitem{Be97} Benenti, S., ``Intrinsic characterization of the variable
  separation in the Hamilton-Jacobi equation,'' J.\ Math.\ Phys.\ {\bf 38},
  6578--6602 (1997).

  \bibitem{Be1857} Bertrand J.~M., ``M\'{e}moire sur quelques-unes des forms les plus simples que
  puissent pr\'{e}senter les int\'{e}grales des \'{e}quations diff\'{e}rentielles du mouvement d'un point mat\'{e}riel,'' J.\ Math.\ Pures Appl.\ {\bf    2}, 113--140 (1857).

  \bibitem{BKM76} Boyer, C.~P., Kalnins, E.~G., and Miller, W., Jr., ``Symmetry
  and separation of variables for the Helmholtz and Laplace equations,''
  Nagoya Math.\ J.\ {\bf 60}, 35--80 (1976).

  \bibitem{EC37} Cartan, E., {\em La Th\'{e}orie des Groups Finis et Continus et la G\'{e}om\'{e}trie Diff\'{e}rentielle, Trait\'{e}es par
  la M\'{e}thode du Rep\`{e}re Mobile} (Gauthier-Villars, Paris, 1937).

  \bibitem{EC01} Cartan, E., {\em Riemannian Geometry in an Orthonormal Frame}
  (translation from Russian by V.~V.\ Goldberg) (World Scientific, Singapore,
  2001).

  \bibitem{delong} Delong, R.~P., Jr., {\em Killing tensors and the Hamiltion-Jacobi equation}, Ph.D. thesis,
  (University of Minnesota, 1982)

  \bibitem{E34} Eisenhart, L.~P., ``Separable systems of St\"{a}ckel,''
  Ann.\ Math.\ {\bf 35}, 284--305 (1934).

  \bibitem{FO98} Fels, M.~E., and Olver, P.~J., ``Moving coframes.\ I.\ A
  practical algorithm,'' Acta.\ Appl.\ Math.\ {\bf 51}, 161--213 (1998).

  \bibitem{FO99} Fels, M.~E., and Olver, P.~J., ``Moving coframes.\ II.\
  Regularization and theoretical foundations,'' Acta.\ Appl.\ Math.\ {\bf 55},
  127--208 (1999).

  \bibitem{G67} Greub, W.~H., {\em Linear Algebra} (Springer, Berlin-Heidelberg, 1967).

  \bibitem{PG74} Griffiths, P., ``On Cartan's method of Lie groups and moving
  frames as applied to uniqueness and existence questions in differential
  geometry,'' Duke Math.\ J.\ {\bf 41}, 775--814 (1974).

  \bibitem{H51} Haantjes, A., ``On $X_{n-1}$-forming sets of eigenvectors", Indag.\ Math.\ {\bf 17}, 158--162 (1955).

  \bibitem{HMS05} Horwood, J.~T., McLenaghan, R.~G., and Smirnov, R.~G.,
  ``Invariant classification of orthogonally separable Hamiltonian systems in
  Euclidean space,'' Commun.\ Math.\ Phys.\ {\bf 259}, 679--709 (2005).

  \bibitem{HM07} Horwood, J.~T., and McLenaghan, R.~G., ``Orthogonal separation
  of variables for the Hamilton-Jacobi and wave equations in three-dimensional
  Minkowski space,'' J.\ Math.\ Phys.\ {\bf 49}, 023501 (48~pages) (2008).

  \bibitem{HMS09} Horwood, J.~T., McLenaghan, R.~G., and Smirnov, R.~G., "Hamilton–Jacobi   theory in three-dimensional Minkowski space via Cartan   geometry," J.\ Math.\ Phys.\ {\bf 50}, 053507 (2009) (41 pages).

  \bibitem{JTH07} Horwood, J.~T., ``On the theory of algebraic invariants of
  vector spaces of Killing tensors,'' J.\ Geom.\ Phys.\ {\bf 58}, 487--501
  (2008).

  \bibitem{H08} Horwood, J.~T., {\em Invariant Theory of Killing Tensors} (PhD dissertation,
  University of Cambridge, 2008).



\bibitem{Ka86} Kalnins, E.~G., {\em Separation of Variables for Riemannian
  Spaces of Constant Curvature} (Longman Scientific \& Technical, New York,
  1986).

  \bibitem{KMW76} Kalnins, E.~G., Miller W., Jr. and Winternitz, P., ``The group O(4),
  Separation of Variables and the Hydrogen Atom,'' SIAM J.\ Appl. \ Math.\ {\bf 30}, 630--664 (1976).


  \bibitem{Li1846} Liouville, J., ``Sur quelques cas particuliers o\`{u} les \'{e}quations de mouvement d'un point mat\'{e}riel
  peuvents'int\'{e}grer,'' J.\ Math.\ Pures Appl.\ {\bf 11}, 345--378 (1846).

  \bibitem{MST02} McLenaghan, R.~G., Smirnov, R.~G., and The, D., ``Group
  invariant classification of separable Hamiltonian systems in the Euclidean
  plane and the $O(4)$-symmetric Yang-Mills theories of Yatsun,'' J.\ Math.\
  Phys.\ {\bf 43}, 1422--1440 (2002).
  
  \bibitem{MMS2004}  McLenaghan, R.~G., Milson, R., and Smirnov, R.~G., ``Killing tensors as
  irreducible representations of the general linear group,'' C. \ R. \ Acad. Sc. Paris, {\bf 339},
  621-624 (2004). 

  \bibitem{MST04} McLenaghan, R.~G., Smirnov, R.~G., and The, D., ``An extension of the classical theory of algebraic invariants to pseudo-Riemannian geometry and Hamiltonian mechanics,'' J. \ Math. \ Phys. \ {\bf 45}, 1079--1120 (2004).

  \bibitem{Mo1881}  Morera, G.,  ''Sulla separazione delle variabili nelle equazioni del moto
  di un punto materiale su una superficie,'' Atti Sci.\ di Torino {\bf 16}, 276--296 (1881).

  \bibitem{N1856} Neumann, C., ``De problemate quodam mechanico, quod ad primam integralium
  ultraellipticorum classem revocatur. Jour. Reine Angew. Math. 56 (1859), 46–63.

  \bibitem{N51} Nijenhuis, A., ``$X_{n-1}$-forming sets of eigenvectors,''
  Neder.\ Akad.\ Wetensch.\ Proc.\ {\bf 51A}, 200-212 (1951).

  \bibitem{O50} Olevsky, M.~N, ``Three orthogonal systems in spaces of constant
  curvature in which the equation $\Delta_{2} u + \lambda u = 0$ admits a
  complete separation of variables,'' Math.\ Sbornik {\bf 27}, 379--426 (1950).

  \bibitem{Olv99} Olver, P.~J., {\em Classical Theory of Invariants}
  (Cambridge University Press, Cambridge, 1999).

  \bibitem{RW05} Rauch-Wojciechowski, S., and Waksj\"{o}, C., ``What an effective
  criterion of separability says about the Calogero type systems,''
  J.\ Nonlin.\ Math.\ Phys.\ {\bf 12}, 535--547 (2005).

  \bibitem{S10} Sch\"{o}bel, K., ``Algebraic integrability conditions for Killing tensors on
  constant sectional curvature maniflolds,'' arXiv:1004.2872v1, (2010).

  \bibitem{Sc40} Schouten, J.~A., ``\"{U}ber Differentialkomitanten zweier
  kontravarianter Gr\"{o}ssen,'' Proc.\ Kon.\ Ned.\ Akad.\ Amsterdam
  {\bf 43}, 449--452 (1940).

  \bibitem{SY04} Smirnov, R.~G., and Yue, J., ``Covariants, joint invariants and
  the problem of equivalence in the invariant theory of Killing tensors defined
  in pseudo-Riemannian spaces of constant curvature,'' J.\ Math.\ Phys.\
  {\bf 45}, 4141--4163 (2004).

  \bibitem{St1891} St\"{a}ckel, P., {\em \"{U}ber die Integration der Hamilton-Jacobischer Differentialgleichung mitelst Separation der Variabeln} (Habilitationsschrift, Halle, 1891).




  \bibitem{WF65} Winternitz, P., and Fri\v{s}, I., ``Invariant expansions of
  relativistic amplitudes and subgroups of the proper Lorenz group,''
  Soviet J.\ Nuclear Phys.\ {\bf 1}, 636--643 (1965).


\end{thebibliography}
\end{document}